\title{On Approximating Multi-Criteria TSP\thanks{An extended abstract of this
   work has appeared in the \emph{Proceedings of the 26th International
   Symposium on Theoretical Aspects of Computer Science
   (STACS 2009)}~\cite{Manthey:MCMinMax:2009}.}}
\author{Bodo Manthey}
\date{\small University of Twente, Department of Applied Mathematics \\
      P.~O.~Box 217, 7500 AE Enschede, The Netherlands \\
      \texttt{b.manthey@utwente.nl}}
\begin{document}


\maketitle


\begin{abstract}
  We present approximation algorithms for almost all variants of the
  multi-criteria traveling salesman problem (TSP).

  First, we devise randomized approximation algorithms for multi-criteria
  maximum traveling salesman problems (\maxtsp). For multi-criteria \maxstsp,
  where the edge weights have to be symmetric, we devise an algorithm with an
  approximation ratio of $2/3 - \eps$. For multi-criteria \maxatsp, where the
  edge weights may be asymmetric, we present an algorithm with a ratio of
  $1/2 - \eps$. Our algorithms work for any fixed number $k$ of objectives.
  Furthermore, we present a deterministic algorithm for bi-criteria \maxstsp\
  that achieves an approximation ratio of $7/27$.

  Finally, we present a randomized approximation algorithm for the asymmetric
  multi-criteria minimum TSP with triangle inequality (\minatsp). This algorithm
  achieves a ratio of $\log n + \eps$.
\end{abstract}


\section{Multi-Criteria Traveling Salesman Problem}
\label{sec:mctsp}

\subsection{Traveling Salesman Problem}
\label{ssec:atsp}

The traveling salesman problem (TSP) is one of the most famous combinatorial
optimization problems. Given a graph, the goal is to find a Hamiltonian cycle of
maximum or minimum weight (\maxtsp\ or \mintsp). An instance of \emph{\maxtsp}
is a complete graph $G=(V,E)$ with edge weights $w: E \to \ratio_+$. The goal is
to find a Hamiltonian cycle of maximum weight. The weight of a Hamiltonian cycle
(more general, of any set of edges) is the sum of the weights of its edges. If
$G$ is undirected, we have \emph{\maxstsp} (symmetric TSP). If $G$ is directed,
we obtain \emph{\maxatsp} (asymmetric TSP).

An instance of \emph{\mintsp} is also a complete graph $G$ with edge weights $w$
that fulfill the triangle inequality: $w(u,v) \leq w(u,x) + w(x,v)$ for all
$u,v,x \in V$. The goal is to find a Hamiltonian cycle of minimum weight. We
have \emph{\minstsp} if $G$ is undirected and \emph{\minatsp} if $G$ is
directed.

All these variants are NP-hard and APX-hard. Thus, we have to content ourselves
with approximate solutions. The currently best approximation algorithm for
\maxstsp\ achieves an approximation ratio of
$7/9$~\cite{PaluchEA:MaxTSP79:2009}, and the currently best algorithm for
\maxatsp\ achieves a ratio of $2/3$~\cite{KaplanEA:TSP:2005}. \minatsp\ can be
approximated with a factor of $O(\log n/\log\log n)$, where $n$ is the number of
vertices of the instance~\cite{AsadpourEA:ATSPolog:2010}.

Cycle covers are often used for designing approximation algorithms for the
TSP~\cite{BlaeserManthey:MWCC:2005,KaplanEA:TSP:2005,FeigeSingh:ATSP:2007}.
A \emph{cycle cover} of a graph is a set of vertex-disjoint cycles such that
every vertex is part of exactly one cycle. The general idea is to compute an
initial cycle cover and then to join the cycles to obtain a Hamiltonian cycle.
This technique is called
\emph{subtour patching}~\cite{GilmoreEA:WellSolved:1985}. Hamiltonian cycles are
special cases of cycle covers that consist of a single cycle. Thus, the weight
of a maximum-weight cycle cover bounds the weight of a maximum-weight
Hamiltonian cycle from above, and minimum-weight cycle covers provide lower
bounds for minimum-weight Hamiltonian cycles. In contrast to Hamiltonian cycles,
cycle covers of maximum or minimum weight can be computed efficiently by
reduction to matching problems~\cite{AhujaEA:NetworkFlows:1993}.

\subsection{Multi-Criteria Optimization}
\label{ssec:mcopt}

In many optimization problems, there is more than one objective function. This
is also the case for the TSP: We might want to minimize travel time, expenses,
number of flight changes, etc., while maximizing, e.g., our profit
along the way. This leads to $k$-criteria variants of the TSP (\emph{\kcmaxstsp k},
\emph{\kcmaxatsp k}, \emph{\kcminstsp k}, and \emph{\kcminatsp k}
for short).

With respect to a single criterion, the term ``optimal solution'' is
well-defined. However, if several criteria are involved, there is no natural
notion of a best choice. Instead, we have to be satisfied with trade-off
solutions. The goal of multi-criteria optimization is to cope with this dilemma.
To transfer the concept of optimal solutions to multi-criteria optimization
problems, the notion of \emph{Pareto curves} (also known as \emph{Pareto sets}
or \emph{efficient sets}) has been introduced (cf.\
Ehrgott~\cite{Ehrgott:MulticriteriaOpt:2005}). A Pareto curve is a set of
solutions that can be considered optimal.

We introduce the following terms only for maximization problems. After that, we
briefly state the differences for minimization problems. An instance of
\kcmaxtsp k is a complete graph $G$ with edge weights
$w_1, \ldots, w_k: E \to \ratio_+$. A Hamiltonian cycle $H$ \emph{dominates} another
Hamiltonian cycle $\tilde H$ if $w_i(H) \geq w_i(\tilde H)$ for all $i \in [k] = \{1, \ldots, k\}$ and
$w_i(H) > w_i(\tilde H)$ for at least one $i$. This means that $H$ is strictly
preferable to $\tilde H$. A \emph{Pareto curve} of solutions contains all solutions
that are not dominated by another solution. For other maximization problems,
$k$-criteria variants are defined analogously.

Unfortunately, Pareto curves cannot be computed efficiently in many cases:
First, they are often of exponential size. Second, because of
reductions from knapsack problems, they are NP-hard to compute even for
otherwise easy optimization problems. Third, TSP is NP-hard already with
only one objective function, and optimization problems do not become easier with
more objectives involved. Therefore, we have to be satisfied with approximate
Pareto curves.

For simpler notation, let $w(H) = (w_1(H), \ldots, w_k(H))$.
Inequalities are meant component-wise. A set $\mathcal P$ of Hamiltonian
cycles of $V$ is called an \emph{$\alpha$ approximate Pareto curve} for $(G, w)$
if the following holds: For every Hamiltonian cycle $\tilde H$, there exists a
Hamiltonian cycle $H \in \mathcal P$ with $w(H) \geq \alpha w(\tilde H)$. We have
$\alpha \leq 1$, and a $1$ approximate Pareto curve is a Pareto curve.

An algorithm is called an \emph{$\alpha$ approximation algorithm} if, given $G$
and $w$, it computes an $\alpha$ approximate Pareto curve. It is called a
\emph{randomized $\alpha$ approximation} if its success probability is at least
$1/2$. This success probability can be amplified to $1-2^{-m}$ by executing the
algorithm $m$ times and taking the union of all sets of solutions. (We can also
remove solutions from this union that are dominated by other solutions in the
union.) Again, the concepts can be transfered easily to other maximization
problems.

Papadimitriou and Yannakakis~\cite{PapadimitriouYannakakis:TradeOffs:2000}
have shown that $(1-\eps)$ approximate Pareto curves of size polynomial in the
instance size and $1/\eps$ exist. The technical requirement for the existence is
that the objective values of all solutions for an instance $X$ are either $0$
or within an interval $[2^{-p(N)}, 2^{p(N)}]$
for some polynomial $p$, where $N$ is the size of $X$. This
is fulfilled by most optimization problems and in particular in our case.

A \emph{fully polynomial time approximation scheme} (FPTAS) for a multi-criteria
optimization problem computes $(1-\eps)$ approximate Pareto curves in time
polynomial in the size of the instance and $1/\eps$ for all $\eps > 0$.
Multi-criteria maximum-weight matching admits a
\emph{randomized FPTAS}~\cite{PapadimitriouYannakakis:TradeOffs:2000}, i.~e.,
the algorithm succeeds in computing a $(1-\eps)$ approximate Pareto curve with
a probability of at least $1/2$. This randomized FPTAS yields also a
randomized FPTAS for
the multi-criteria maximum-weight cycle cover
problem~\cite{MantheyRam:MultiCritTSP:2009}.

To define Pareto curves and approximate Pareto curves also for minimization
problems, in particular for \kcminstsp k\ and \kcminatsp k, we have to replace all
``$\geq$'' and ``$>$'' above by ``$\leq$'' and ``$<$''. Furthermore, $\alpha$
approximate Pareto curves are now defined for $\alpha \geq 1$, and an FPTAS achieves
an approximation ratio of $1+\eps$. There also exists a randomized
FPTAS for the multi-criteria minimum-weight cycle cover problem.

\subsection{Known Results}
\label{ssec:related}

For an overview of the literature about multi-criteria optimization, including
multi-criteria TSP, we refer to Ehrgott and
Gandibleux~\cite{EhrgottGandibleux:Multiobjective:2000,Ehrgott:MulticriteriaOpt:2005}.

Angel et al.~\cite{AngelEA:BicritTSP:2004,AngelEA:MultiTSP:2005} have considered
\kcminstsp k\ restricted to edge weights $1$ and $2$. They analyzed a local search
heuristic and proved that it achieves an approximation ratio of $3/2$ for $k = 2$
and of $2 - \frac{2}{k+1}$ for $k \geq 3$. Ehrgott~\cite{Ehrgott:MultiTSP:2000}
has analyzed a variant of \kcminstsp k, where all objectives are encoded into a
single objective by using some norm. He proved approximation ratios between
$3/2$ and $2$ for this problem, where the ratio depends on the norm used.
\kcminstsp k\ can be approximated with a ratio of
$2+\eps$~\cite{MantheyRam:MultiCritTSP:2009}.
For \kcminatsp k, we are not aware of any approximation algorithm.

Bl\"aser et al.~\cite{BlaeserEA:MCMaxTSP:2008} have devised the first randomized
approximations for \kcmaxstsp k\ and \kcmaxatsp k. Their algorithms achieve ratios
of $\frac 1k - \eps$ for \kcmaxstsp k and $\frac 1{k+1} - \eps$ for
\kcmaxatsp k. They have conjectured that
approximation ratios of $\Omega(1/\log k)$ are possible.

\subsection{New Results}
\label{ssec:new}

We devise approximation algorithms for \kcmaxstsp k, \kcmaxatsp k, and \kcminatsp k\
that work for any number $k$ of criteria.

First, we solve the conjecture of Bl\"aser et al.~\cite{BlaeserEA:MCMaxTSP:2008}
affirmatively. We even prove a stronger result:
For \kcmaxstsp k, we achieve a ratio
of $2/3 - \eps$, while for \kcmaxatsp k, we achieve a ratio of $1/2 - \eps$
(Section~\ref{sec:alg}).
The general idea of these algorithms is sketched in
Section~\ref{sec:idea}. After that, we introduce a decomposition technique in
Section~\ref{sec:decomp} that will lead to our algorithms (Section~\ref{sec:alg}).
Our algorithms are randomized and their running-time is polynomial in the input size for any fixed
$\eps > 0$ and any fixed number $k$ of criteria.

Furthermore, as a first step towards deterministic approximation algorithms for
\kcmaxtsp k, we devise an approximation algorithm for \kcmaxstsp 2 that achieves
an approximation ratio of $7/27$ (Section~\ref{sec:detbi}).
As a side effect, this result proves that for
\kcmaxstsp 2, there always exists a single Hamiltonian cycle that already is a
$1/3$ approximate Pareto curve. This does not
hold for any other variant of multi-criteria TSP.

Finally, we devise the first approximation algorithm for \kcminatsp k\
(Section~\ref{sec:atsp}).
The approximation
ratio of our algorithm is $\log n + \eps$, where $n$ is the
number of vertices.
Our algorithm is randomized and its running-time is polynomial in the input size
and in $1/\eps$ for any fixed number of criteria.

\section{Idea for Multi-Criteria Max-TSP}
\label{sec:idea}

For \maxatsp, we can easily get a $1/2$ approximation: We compute a
maximum-weight cycle cover and remove the lightest edge of each cycle. This yields
a collection of paths. Then we add edges to connect the
paths, which yields a Hamiltonian cycle. For \maxstsp, this approach gives a
ratio of $2/3$ since the length of every cycle is at least three.

Unfortunately, this does not generalize to multi-criteria \maxtsp. The reason is
that the term ``lightest edge'' is not
well defined: An edge that has little weight with respect to one objective might
have a huge weight with respect to another objective. Based on this
observation, the basic idea behind our algorithms is ``guessing'' the heavy edges
such that the remaining edges are all light-weight. A similar technique has already been
used by Ravi and Goemans~\cite{RaviGoemans:ConstrainedSpanningTree:1996} for bi-criteria
spanning trees.
Since the remaining edges are light-weight, and we can 
break one edge of every cycle without losing two much weight with respect to any objective function.
This is based on the decompositions introduced in the following section.

\section{Improved Decompositions}
\label{sec:decomp}

Given a cycle cover $C$, a decomposition of $C$ is a collection $P \subseteq C$
of paths. From such a collection $P$, we obtain a Hamiltonian cycle just by
connecting the endpoints of the paths appropriately.
In order to get approximation algorithms for multi-criteria Max-TSP, our goal is to find
collections $P$ with $w(P) \geq \alpha w(C)$ for an $\alpha$ as large as possible.
Decompositions have already been used by Bl\"aser et al.~\cite{BlaeserEA:MCMaxTSP:2008}
for their approximation algorithms for multi-criteria Max-TSP. With their decompositions,
they have achieved approximation ratios of $\frac{1}{k} - \eps$ and $\frac{1}{k+1} - \eps$,
and they conjectured that approximation ratio $\Omega(1/\log k)$ is possible.
We introduce a slightly different kind of decompositions, which enables us to design
constant-factor approximations.

Let $C$ be a cycle cover, and let $w=(w_1, \ldots, w_k)$ be edge weights. We say
that the pair $(C, w)$ is $\eta$-light for some $\eta\leq 1$ if
$w(e) \leq \eta w(C)$ for all $e \in C$.
From now on, let $\eta_{k,\eps} = \frac{\eps^2}{2\ln k}$.

\begin{theorem}
\label{thm:improvedec}
  Let $\eps \in (0,\frac 12)$ be arbitrary, and let $k \geq 2$ be
  arbitrary. Let $C$ be a cycle cover, and let $w = (w_1, \ldots, w_k)$ be edge
  weights such that $(C, w)$ is $\eta_{k,\eps}$-light.

  If $C$ is directed, then there exists a collection $P \subseteq C$ of paths
  with $w(P) \geq (\frac 12 - \eps) \cdot w(C)$.

  If $C$ is undirected, then there exists a collection $P \subseteq C$ of paths
  with $w(P) \geq (\frac 23 - \eps) \cdot w(C)$.
\end{theorem}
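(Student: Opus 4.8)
The plan is to turn the reusable random experiment from the lower-bound proof of Theorem~\ref{thm:knockout} into a positive statement via the probabilistic method. First I would delete one edge from each cycle of $C$ independently and uniformly at random, obtaining a random collection of paths $P \subseteq C$. The crucial observation is that a \emph{single} such random $P$ can be made to serve all $k$ objectives simultaneously: it suffices to show that for each fixed objective the probability of $P$ being ``bad'' is strictly less than $1/k$, so that a union bound over the $k$ objectives still leaves positive probability that $P$ is good for all of them at once. This is precisely the conceptual shift over the $\Theta(1/\log k)$ decompositions of Section~\ref{ssec:asympopt}: instead of asking one $P$ to attain an $\alpha$-fraction under $\alpha$-lightness, I demand much stronger lightness (of order $\log k / \eps^2$) and in return obtain a \emph{constant} fraction, $\tfrac12 - \eps$ in the directed and $\tfrac23 - \eps$ in the undirected case.

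For the per-objective analysis, fix $i \in [k]$ and let $X_j = \sum_{e \in c_j \cap P} w_i(e)$ be the contribution of cycle $c_j$, so that $X = w_i(P) = \sum_j X_j$ and the $X_j$ are independent across cycles. Since each edge of $c_j$ survives with probability $1 - 1/\ell_j$, where $\ell_j$ is the cycle length, I get $\expected(X_j) = (1 - 1/\ell_j)\, w_i(c_j)$. In the directed case $\ell_j \geq 2$ yields $\expected(X) \geq \tfrac12 w_i(C)$, while in the undirected case every cycle has length $\ell_j \geq 3$, giving $\expected(X) \geq \tfrac23 w_i(C)$. Moreover $X_j = w_i(c_j) - w_i(\text{removed edge})$ ranges over an interval $[a_j,b_j]$ whose width is the spread of the edge weights on $c_j$, so $b_j - a_j \leq \max_{e \in c_j} w_i(e)$.

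The step I expect to be the main obstacle is controlling the denominator $\sum_j (b_j-a_j)^2$ in Hoeffding's inequality (Lemma~\ref{lem:hoeffding}) so that the $\gamma$-lightness actually pays off. Here I would use that lightness bounds every single edge, $b_j - a_j \leq \max_{e \in c_j} w_i(e) \leq w_i(C)/\gamma$, together with the telescoping bound $\sum_j (b_j - a_j) \leq \sum_j w_i(c_j) = w_i(C)$, to obtain the quadratic-to-linear estimate
\[
   \sum_j (b_j - a_j)^2 \;\leq\; \frac{w_i(C)}{\gamma} \cdot \sum_j (b_j - a_j) \;\leq\; \frac{w_i(C)^2}{\gamma}.
\]
Because $\expected(X) \geq \tfrac12 w_i(C)$, the bad event $\{X < (\tfrac12 - \eps)w_i(C)\}$ is contained in $\{X < \expected(X) - \eps\, w_i(C)\}$, so applying Hoeffding with deviation $t = \eps\, w_i(C)$ gives $\probab\bigl(w_i(P) < (\tfrac12 - \eps) w_i(C)\bigr) \leq \exp(-2\eps^2 \gamma)$. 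Plugging in $\gamma = \tfrac{2\ln k}{\eps^2}$ makes this at most $k^{-4} < 1/k$, with ample room for the union bound; the undirected case is identical with $\tfrac23$ in place of $\tfrac12$. A union bound over the $k$ objectives then yields $\probab(\exists i: w_i(P) < \text{target}) < 1$, so a realization $P$ meeting all $k$ targets must exist, which is the claimed collection of paths.
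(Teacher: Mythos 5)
Your proposal is correct and follows essentially the same route as the paper: delete one uniformly random edge per cycle, bound the per-objective failure probability via Hoeffding's inequality using the lightness condition to control $\sum_j (b_j-a_j)^2$, and finish with a union bound over the $k$ objectives. The only cosmetic difference is that the paper rescales so that $w_i(C) = 2\ln k/\eps^2$ and $w_i(e)\le 1$ before applying Hoeffding, whereas you work with unscaled weights; the resulting bounds ($k^{-4}$ versus the paper's $k^{-2}$) both suffice.
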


\begin{proof}

  The proof uses Hoeffding's inequality~\cite[Theorem 2]{Hoeffding:SumsBounded:1963}.

\begin{lemma}[Hoeffding's inequality]
\label{lem:hoeffding}
  Let $X_1, \ldots, X_n$ be independent random variables, where $X_j$ assumes
  values in the interval $[a_j, b_j]$. Let $X = \sum_{j = 1}^n X_j$. Then
  \[
         \probab\bigl(X < \expected(X) - t\bigr)
    \leq \exp\left(-\frac{2 t^2}{\sum_{j = 1}^n (b_j-a_j)^2}\right).
  \]
\end{lemma}

  We start by considering the directed
  case. Let $C$ be a directed cycle cover with edge weights $w$ such that
  $(C,w)$ is $\eta_{k, \eps}$-light. We scale the weights
  such that $w(C) = 1/\eta_{k, \eps}$. Thus, $w(e) \leq 1$ for all $e \in C$.

  Let $c_1,\ldots, c_m$ be the cycles of $C$, and consider any cycle $c_j$ of $C$.
  We choose one edge of $c_j$ for removal uniformly at random. By doing this for
  $j \in [m]$, we obtain a decomposition $P$ of $C$. Fix any objective $i$. Let
  $X_j = \sum_{e \in c_j \cap P} w_i(e)$ be the random variable of the
  contribution of $c_j$ to the weight $w_i(P)$. Since $w_i(e) \in [0,1]$ for all
  $e \in C$, there exist $a_j, b_j \in \real$ such that $X_j$ assumes only
  values in $[a_j, b_j]$ and $0 \leq b_j - a_j \leq 1$. Let
  $X=\sum_{j=1}^m X_j = w_i(P)$ be the random variable of the weight of $P$ with
  respect to objective $i$. Since every cycle has a length of at least two,
  the probability of deleting any fixed edge is at most $1/2$.
  Thus, by linearity of expectation, we have $\expected(X)  \geq A/2$.

  If we can show that $\probab(X < (\frac 12 - \eps) \cdot A) < 1/k$, then, by a union bound,
  $\probab(\exists i \in [k]: w_i(P) < (\frac 12 - \eps) \cdot A) < 1$, which would imply the existence
  of a decomposition $P$ as claimed.
  Since $0 \leq b_j - a_j \leq 1$, we have
  \[
         w_i(C) = A  = \sum_{j=1}^m w_i(c_j)
    \geq \sum_{j=1}^m b_j
    \geq \sum_{j=1}^m b_j - a_j
    \geq \sum_{j=1}^m (b_j - a_j)^2.
  \]
  Plugging this into Hoeffding's bound yields
  \[
         \probab\left(w_i(P) < \left(\frac 12 - \eps\right) \cdot w_i(C)\right)
    \leq \exp\left(- \frac{2 \eps^2 w_i(C)^2}{w_i(C)}\right)
 <  \frac 1{k^2}
  \]
  for $k \geq 2$ since $w_i(C) = 2\ln k/\eps^2 = 1/\eta_{k, \eps}$.
  The proof for undirected cycle covers is identical except for $\expected(X) \geq 2A/3$ and
  therefore omitted.
\end{proof}

We now know that decompositions exist. But, in
order to use them in approximation algorithms, we have to find them efficiently.
Theorem~\ref{thm:improvedec} immediately yields a randomized algorithm:
Assume that we have an $\eta_{k, \eps}$-light
pair $(C,w)$. We randomly select one
edge of every cycle of $C$ for removal and put all remaining edges
into $P$. The probability that $P$ is not a $(\frac 12 - \eps)$-
or $(\frac 23 - \eps)$-decomposition (depending on whether $C$
is directed or undirected) is bounded from above by $1/k \leq 1/2$. Thus, we
obtain a feasible decomposition with constant probability. We iterate
this process until a feasible decomposition is found.

For the deterministic algorithm, we assume again that we have an
$\eta_{k,\eps}$-light pair $(C,w)$. We scale the weights such that
$w_i(C) = 1/\eta_{k,\eps}$ for all $i$. Then $w(e) \leq 1$ for all $e \in C$.
The main idea is to reduce an arbitrary instance to a
new instance whose size depends only on $k$ and $\eps$.

First, we normalize our cycle cover such that they consist solely of cycles of
the shortest possible length. For directed cycle covers $C$, we can restrict
ourselves to cycles of length two: Any cycle $c$ of length $\ell$ with edges
$e_1, \ldots, e_\ell$ can be replaced by $\lfloor \ell/2 \rfloor$ cycles
$(e_{2j-1}, e_{2j})$ for $j = 1, \ldots, \lfloor \ell/2 \rfloor$. If $\ell$ is
odd, then we add an edge $e_{\ell+1}$ with $w(e_{\ell+1}) = 0$ and add the cycle
$(e_\ell, e_{\ell+1})$. (Technically, edges consist of vertices, and we cannot
simply reconnect them. What we mean is that we create new cycles of length two,
and the edges of those cycles have the same names and
the same weights as in the original cycles.)
We do this for all cycles of length at least three and
call the resulting cycle cover $C'$. Now any decomposition $P'$ of $C'$ yields a
decomposition $P$ of the original cycle cover $C$ by removing the newly added
edges $e_{\ell+1}$ if they are in $P'$. Furthermore, $w_i(e) \leq 1$ for the new
cycle cover $C'$. Analogously, undirected cycle covers can be normalized to
consist solely of cycles of length three.

Second, assume that we have two cycles $c$ and $c'$ in a normalized cycle cover
with $w(c) + w(c') \leq 1$. Then we can combine $c$ and $c'$ to $\tilde c$: Let
$e_1,e_2$ and $e'_1, e'_2$ be the edges of $c$ and $c'$, respectively.
Then we can replace $e_i$ and $e'_i$
by $\tilde e_i$ with $w(\tilde e_i) = w(e_i) + w(e'_i)$. The cycle cover plus
edge weights thus obtained are still $\eta_{k,\eps}$-light. We continue
combining cycles greedily until no more combinations are possible.
For undirected cycles, we proceed analogously. The difference is that the cycles
consist of three edges.
The resulting cycle cover contains at most $2k/\eta_{k,\eps}$ cycles.
Thus, an optimal decomposition can be found with a running-time that now only
depends on $k$ and $\eps$.

Overall, for every fixed $\eps > 0$ and $k \geq 2$, we have a deterministic
algorithm that, given an $\eta_{k, \eps}$-light directed cycle cover $C$ with
edge weights $w$, computes a 
$(\frac 12 - \eps)$-decomposition of $C$ in time polynomial in the input size.
If $C$ is undirected, a
$(\frac 23 - \eps)$-decomposition can be computed analogously.
We call these algorithms \lw\ with parameters $C$, $w$, and~$\eps$.

\section{Approximation Algorithms for Multi-Criteria Max-TSP}
\label{sec:alg}

In this section, $\ca$ denotes the randomized FPTAS for cycle covers. More
precisely, let $G$ be a graph (directed or undirected), $w=(w_1, \ldots, w_k)$
be edge weights, $\eps > 0$, and $p > 0$. Then
$\algcc(G,w, \eps, p)$ yields a $(1-\eps)$-approximate Pareto curve
of cycle covers of $G$ with weights $w$ with a success probability of at least
$1-p$.

\subsection{Multi-Criteria \maxatsp}
\label{ssec:algatsp}

Our goal is to guess small sets of heavy edges such that
decomposition on the remaining graph is possible.
To do so, we need the following notation.
For a graph $G=(V,E)$ and a subset $K \subseteq E$ of $G$'s edges
($K$ forms a subset of a Hamiltonian cycle),
we get $G_{-K}$ by contracting all edges of $K$. Contracting a single
edge $(u,v)$ means removing all outgoing edges of $u$, removing
all incoming edges of $v$, and identifying $u$ and $v$.
Analogously, for a Hamiltonian cycle $H$ and edges $K$, we obtain
a Hamiltonian cycle $H_{-K}$ of $G_{-K}$ by contracting the edges in $K$.
If $(G, w)$ is an instance, then
$(G_{-K}, w)$ denotes the instance with $w$ modified according to the
edge contractions.

We now prove the following: For every Hamiltonian cycle $\tilde H$, there exists a (small) set
$K \subseteq \tilde H$ of edges such that $\tilde H_{-K}$ is light-weight.
Therefore, let
\[
    f(k,\eps)
  =       k
    \cdot \left\lceil
            \frac{\log(1/2 + \eps)}
                 {\log\bigl(1-\eta_{k,\eps/2} + (\frac \eps2)^3\bigr)}
          \right\rceil.
\]
From now on, we assume that $\eps \in (0, \frac 1{2\ln k})$ is fixed.

\begin{lemma}
\label{lem:dirsubset}
  For every Hamiltonian cycle $\tilde H$ and every $\eps > 0$, there exists a
  subset $K \subseteq \tilde H$ such
  that $|K| \leq f(k,\eps)$ and, for every $i \in [k]$, we have
  \begin{enumerate}
  \item $w_i(K) \geq (\frac 12 - \eps) \cdot w_i(\tilde H)$ or
     \label{enough}
  \item $w_i(e) \leq \bigl(\eta_{k, \eps/2} - (\frac \eps2)^3\bigr) \cdot w_i(\tilde H_{-K})$
     for all $e \in \tilde H_{-K}$.
     \label{light} 
  \end{enumerate}
\end{lemma}

\begin{proof}
We put edges one by one into $K$ until the properties are met for all objectives.
If not all $i$ fulfill Property~\ref{enough} or~\ref{light} yet, then we have to
add another vertex to $K$.
Let $\xi = \eta_{k, \eps/2} - (\frac \eps2)^3$ for short.
There exists an edge $e \in \tilde H \setminus K$
and an $i \in [k]$ such that
$w_i(e) > \xi w_i(\tilde H_{-K})$
and $w_i(K) < (\frac 12 - \eps) \cdot w_i(\tilde H)$.
We say that this $i$ is the winner of round $j$,
and we add $e$ to $K$. Let us call the new set
$K' = K \cup \{e\}$.

Whenever an $i \in [k]$ is a winner, we have
\[
  \frac{w_i(\tilde H_{-K'})}{w_i(\tilde H_{-K})} \leq 1-\xi \: .
\]
By definition, we have $w(K) + w(\tilde H_{-K}) = w(\tilde H)$.
Thus, if $i$ has won $\ell$ rounds on the way to $K$, we can conclude
that $w_i(K) \geq \bigl(1-(1-\xi)^\ell \bigr) \cdot w_i(\tilde H)$.
For $\ell =\bigl\lceil \frac{\log 1/2 + \eps}{\log(1-\xi)}\bigr\rceil$,
we have
$w_i(K) \geq (\frac 12 - \eps) \cdot w_i(\tilde H)$.
Observing that every round has a winner completes the proof.
\end{proof}

Now we know that few edges for $K$ suffice to make $\tilde H_{-K}$
light-weight.
But given the set $K$, how do we find an appropriate cycle cover?
Therefore, let $\beta_i = \max\{w_i(e) \mid e \in \tilde H_{-K}\}$ be the
weight of the heaviest edge with respect to the $i$th objective. Let
$\beta = \beta(\tilde H_{-K}) = (\beta_1,\ldots, \beta_k)$.
We modify our edge weights $w$ to $w^\beta$ as follows:
\[
  w^\beta(e) = \begin{cases}
                 w(e) & \text{ if $w(e) \leq \beta$ and} \\
                 0    & \text{ if $w_i(e) > \beta_i$ for some $i$.}
                 \end{cases}
\]
This means that we set all edge weights exceeding $\beta$ to $0$. Since
$\tilde H_{-K}$ does not contain any of those edges, we have
$w(\tilde H_{-K}) = w^\beta(\tilde H_{-K})$.
The advantage of $w^\beta$ is that, if we compute a
$(1-\eps)$ approximate Pareto curve $\pc^\beta$ of cycle covers with edge
weights $w^\beta$, we obtain a cycle cover to which we can apply decomposition
to obtain a collection $P$ of paths.
This is stated in the following lemma.

\begin{lemma}
\label{lem:betapower}
  Let $\nu > 0$ be arbitrary. Let $H_0$ be a directed Hamiltonian cycle
  with $w(e) \leq \bigl(\eta_{k,\nu}  - \nu^3\bigr) \cdot w(H_0)$ for all
  $e \in H_0$. Let $\beta = \beta(H_0)$, and let $\pc$ be a
  $(1-\nu)$ approximate Pareto curve of cycle covers with respect to~$w^\beta$.

  Then $\pc$ contains a cycle cover $C$ with
  $w^\beta(C) \geq (1-\nu) \cdot w(H_0)$ and
  $w^\beta(e) \leq \eta_{k,\nu} \cdot w^\beta(C)$ for all $e \in C$. This
  cycle cover $C$ yields a decomposition $P \subseteq C$ with
  $w(P) \geq (\frac 12 - 2\nu) \cdot w(H_0)$.
\end{lemma}

\begin{proof}
  Since the Hamiltonian cycle $H_0$ is in particular a cycle cover, the set
  $\pc$ contains a cycle cover $C$ with $w^\beta(C) \geq
  (1-\nu) \cdot w^\beta(H_0) = (1-\nu) \cdot w(H_0)$. For every
  edge $e \in C$ and every $i$, we have $w_i^\beta(e) \leq
  (\eta_{k,\nu} - \nu^3) \cdot w^\beta_i(H_0) \leq
  \frac{\eta_{k,\nu} - \nu^3}{1-\nu} \cdot w_i^\beta(C) \leq
  \eta_{k,\nu} \cdot w_i(C)$. The last inequality follows from
  $\eta_{k,\nu} - \nu^3 \leq \eta_{k,\nu} \cdot (1-\nu)$. This is
  equivalent to $\nu^2 \geq \eta_{k,\nu} = \frac{\nu^2}{2\ln k}$, which is valid.

  The cycle cover $C$ can be decomposed into a collection $P \subseteq C$ of
  paths with
  $w(P) \geq w^\beta(P) \geq (\frac 12 - \nu) \cdot w^\beta(C) \geq (\frac 12 - \nu) \cdot (1-\nu) \cdot w(H_0)
  \geq (\frac 12 - 2\nu) \cdot w(H_0)$ by Theorem~\ref{thm:improvedec}.
\end{proof}

For $H_0 = \tilde H_{-K}$, the set $P$ approximates $\tilde H_{-K}$, and $P \cup K$ yields
a tour $H$ that approximates $\tilde H$.
However, to obtain an algorithm, we have to find $\beta$ and $K$.  So far, we have assumed
that we already know the Hamiltonian cycles that we aim for. But there is only a
polynomial number of possibilities for $\beta$ and $K$: For all $\beta$ and
for all $i \in [k]$, we can
assume that there is an edge with $w_i(e) = \beta_i$. Thus,
there are at most $O(n^2)$ choices for $\beta_i$, hence at most
$O(n^{2k})$ in total. The cardinality of $K$ is bounded in terms of $f(k, \eps)$
as shown in the lemma above. For fixed $k$ and $\eps$, there is only a
polynomial number of subsets of cardinality at most $f(k,\eps)$.

Overall, we obtain \apa\ (Algorithm~\ref{alg:maxatsp}) and the following
theorem.

\begin{algorithm}[t]
\begin{algorithmic}[1]
\item[] $\ptsp \leftarrow \atspalg(G, w, \eps)$
\Input directed complete graph $G=(V,E)$, $w: E \to \ratio_+^k$, $\eps > 0$
\Output $(\frac 12 - \eps)$ approximate Pareto curve \ptsp\ for
        \kcmaxatsp k with probability at least $\frac 12$
\ForAll{$K \subseteq E$ with $|K| \leq f(k, \eps)$ that form a subset of a tour and bounds
        $\beta$}
   \State       $\pc_{K,\beta}
      \leftarrow \algcc(G_{-K},w^\beta, \frac{\eps}2,
                             \frac{1}{2n^{2k+f(k,\eps)}})$
      \label{atsp:cyco}
   \ForAll{$I \subseteq [k]$ and $C \in \pc_{K, \beta}$}
      \If{$w^\beta_I(e) \leq \eta_{k,\eps/2} \cdot w^\beta_I(C)$ for all
           $e \in C$}
         \State $P \leftarrow
              \lightweight(C, w^\beta_I, \frac \eps2)$
         \State add edges to $P \cup K$ to obtain a Hamiltonian cycle $H$
         \State add $H$ to $\ptsp$
      \EndIf
   \EndFor
\EndFor
\end{algorithmic}
\caption{Approximation algorithm for \kcmaxatsp k.}
\label{alg:maxatsp}
\end{algorithm}

\begin{theorem}
  \label{thm:mcatsp}
  For every fixed $k \geq 2$ and $\eps > 0$, \apa\ is a randomized
  $\frac 12 - \eps$ approximation for $k$-criteria \maxatsp\ whose running-time
  is polynomial in the input size.
\end{theorem}

\begin{proof}
  Let us analyze the approximation ratio first. To do this, we assume that all
  randomized computations are successful. After that, we analyze success
  probability and running-time.

  Let $\tilde H$ be an arbitrary Hamiltonian cycle.
  We have to show that there exists a
  Hamiltonian cycle $H \in \ptsp$ with $w(H) \geq (\frac 12 - \eps) \cdot
  w(\tilde H)$.
  By Lemma~\ref{lem:dirsubset}, there exists a set $K \subseteq \tilde H$ of
  cardinality at most $f(k, \eps)$ and a
  set $I \subseteq [k]$ with the following properties:
  \begin{itemize}
  \item For every $i \in [k] \setminus I$, we have
     $w_i(K) \geq (\frac 12 - \eps) \cdot w_i(\tilde H)$.
  \item For every $i \in I$ and for every edge $e \in \tilde H_{-K}$, we have
     $w_i(e) \leq \bigl(\eta_{k, \eps/2} - (\frac \eps2)^3\bigr) \cdot w_i(\tilde H_{-K})$.
  \end{itemize}

  Let $\beta = \beta(\tilde H_{-K})$. According to Lemma~\ref{lem:betapower} with
  $H_0 = \tilde H_{-K}$ and $\nu = \eps/2$, the set
  $\pc_{K, \beta}$ contains a cycle cover $C$ that can be decomposed into
  a collection $P$ of paths such that
  $w_i(P) \geq (\frac 12 - \eps) \cdot w_i(\tilde H_{-K})$.
  The set $P \cup K$ is also a collection of paths. We get a Hamiltonian cycle $H \supseteq P \cup K$
  by adding arbitrary edges.
  For the weight of $H$, we have
  \[
    w_i(H) \geq w_i(K)
           \geq \left(\frac 12 - \eps\right) \cdot w(\tilde H)
  \]
  for every $i \in I$ and
  \[
    w_i(H) \geq w_i(P) + w_i(K)
           \geq \left(\frac 12 - \eps\right) \cdot w_i(\tilde H_{-K}) + w_i(K)
           \geq \left(\frac 12 - \eps\right) \cdot w_i(\tilde H)
  \]
  for every $i \in [k] \setminus I$ since $w_i(K) + w_i(\tilde H_{-K}) = w_i(\tilde H)$.
  This proves the approximation ratio.

  The running time and the error probability remain to be analyzed. The error
  probabilities of the randomized computations in line~\ref{atsp:cyco}
  are chosen such that they sum up to at most $1/2$. This
  yields that the probability that one of the computations fails is at most
  $1/2$. The running time follows since $f(k, \eps)$ is independent of $n$, the
  number of bounds $\beta$ is bounded by $n^{2k}$, and there are $2^k$ possible
  sets~$I$.
\end{proof}

\subsection{Multi-Criteria \maxstsp}
\label{ssec:algstsp}

Our goal is again to show that, for any Hamiltonian cycle $\tilde H$, taking out a small set $K$ of heavy edges suffices
to make the rest of $\tilde H$ light-weight.
Unfortunately, contracting heavy edges in undirected graphs is not as easy as it is
in directed graphs: The statements ``remove all incoming'' and
``remove all outgoing'' edges are not well-defined in an undirected graph.

To circumvent these problems, we do not contract edges $e = \{u,v\}$. Instead, we
set the weight of all edges incident to $u$ or $v$ to $0$. This allows us to add
the edge $e$ to any collection $P$ of paths without decreasing the weight: We
remove all edges incident to $u$ or $v$ from $P$, and then we add $e$. The
result is again a collection of paths.

However, by setting the weight of edges adjacent to $u$ or $v$ to $0$, we might
destroy a lot of weight with respect to some objective. To circumvent this
problem as well, we put larger neighborhoods of the edges into $K$.
In this way, we can add our heavy-weight edge (plus some more edges of its
neighborhood) to the collection of paths without losing too much weight from
removing other edges. Lemma~\ref{lem:somelight} below justifies this.

\begin{lemma}
\label{lem:somelight}
  Let $\tilde H$ be a Hamiltonian cycle as described above, let
  $w=(w_1, \ldots, w_k)$ be edge weights,
  and let $e_1, \ldots, e_\ell$ be any $\ell$ distinct edges of $\tilde H$.
  Then there exists a $j \in [\ell]$ such that
  \[
    w(e_j) \leq \frac{k}{\ell} \cdot w(\tilde H).
  \]
\end{lemma}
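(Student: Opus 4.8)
My plan is to prove this by a simple counting (averaging) argument combined with a union bound, exploiting only that $e_1, \ldots, e_\ell$ are \emph{distinct} edges of $\tilde H$ and that the weights are nonnegative; the Hamiltonian structure of $\tilde H$ plays no role beyond guaranteeing that these edges are disjoint, so that for every objective $i \in [k]$ we have $\sum_{j=1}^\ell w_i(e_j) \leq w_i(\tilde H)$.

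The first step is to quantify, for each objective separately, how many of the $\ell$ edges can be ``expensive''. Call an edge $e_j$ \emph{bad for $i$} if $w_i(e_j) > \frac{k}{\ell} \cdot w_i(\tilde H)$, and let $m_i$ be the number of such $j$. If $w_i(\tilde H) > 0$, summing the strict inequalities over the bad edges gives $m_i \cdot \frac{k}{\ell} \cdot w_i(\tilde H) < \sum_{j=1}^\ell w_i(e_j) \leq w_i(\tilde H)$, hence $m_i < \ell/k$; if $w_i(\tilde H) = 0$, then all $w_i(e_j)$ vanish and $m_i = 0$, so $m_i < \ell/k$ holds in either case.

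The second step aggregates over all $k$ objectives: the total number of (edge, objective) incidences in which an edge is bad for some objective is $\sum_{i=1}^k m_i < k \cdot \frac{\ell}{k} = \ell$. Now I argue by contradiction. Suppose no good edge existed, i.e.\ every $e_j$ with $j \in [\ell]$ were bad for at least one objective $i$. Then each of the $\ell$ edges would contribute at least one to $\sum_{i=1}^k m_i$, forcing $\sum_{i=1}^k m_i \geq \ell$, which contradicts the strict bound just established. Therefore some $e_j$ is bad for no objective, which is exactly the claim $w(e_j) \leq \frac{k}{\ell} \cdot w(\tilde H)$ componentwise.

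I do not expect any serious obstacle here; the only point requiring a little care is keeping the inequality in the per-objective bound \emph{strict} (so that the integer quantities $m_i$ sum to strictly less than $\ell$), together with handling the degenerate case $w_i(\tilde H) = 0$ separately. Both are routine, so the argument is essentially a one-line pigeonhole once the bad-edge counts are set up.
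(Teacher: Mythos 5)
Your argument is correct and is essentially the same as the paper's: both proceed by contradiction via a double-counting/averaging argument over the $k \times \ell$ grid of (objective, edge) pairs, exploiting only that the distinct edges satisfy $\sum_{j=1}^\ell w_i(e_j) \leq w_i(\tilde H)$ for each $i$. The paper sums the weights directly after normalizing $w_i(\tilde H) = 1$ to get the contradiction $k < k$, whereas you count bad incidences per objective and apply pigeonhole, but this is only a difference in bookkeeping, not in the underlying idea.
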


\begin{proof}
  Suppose otherwise and assume without loss of generality that
  $w_i(\tilde H) > 0$ for all $i$. We scale the weights such that
  $w_i(\tilde H) = 1$ for all $i$. Then for all $j$ there is an $i_j$ with
  $w_{i_j}(e_j) > \frac{k}{\ell} \cdot w_{i_j}(\tilde H) = \frac k \ell$.
  Thus, $\sum_{j=1}^\ell \sum_{i=1}^k w_i(e_j) >
  \sum_{j=1}^\ell \frac{k}{\ell} \cdot w_{i_j}(\tilde H) = k$. But, since all
  edges are distinct, we also have $\sum_{j=1}^\ell \sum_{i=1}^k w_i(e_j) \leq
  \sum_{i=1}^k w_i(\tilde H) = k$ -- a contradiction.
\end{proof}

Let $\tilde H$ be an arbitrary Hamiltonian cycle. Let
$e_0, e_1, \ldots, e_{n-1}$ be the edges of $\tilde H$ in the order in which
they appear in $\tilde H$ ($e_0$ is chosen arbitrarily). Let
$e_j=\{v_j, v_{j+1}\}$, where arithmetic of the indices here and in the
following is modulo $n$.
Now let $e_0$ be a heavy-weight edge of $\tilde H$. Then we put $e_0$ into our set $K$,
and we set the
weight of all edges incident to $v_0$ and $v_1$ to $0$. But in this way, we
lose the weight of $e_1$ and $e_{-1}$. In order to maintain the
approximation ratio, we have to avoid that we lose too much weight. Therefore, we
consider paths that include $e_0$. If we set the weight of all edges incident to
$v_{p+1}, \ldots, v_{q}$ with $p < 0 < q$ to $0$, we lose
only the weight of $e_{p}$ and $e_{q}$. To keep track of things, we also
put $e_{p+1}, \ldots, e_{q-1}$ into $K$. Furthermore, we put the two
edges $e_{p}$ and $e_{q}$, whose weight might get lost, into a set $T$. By
Lemma~\ref{lem:somelight}, we can make sure that both $e_{p}$ and $e_{q}$ are
not too heavy. Finally, we put $v_{p+1}, \ldots, v_{q}$ into the set $L$, which is
the set of vertices whose adjacent edges have now weight $0$. We denote the
corresponding edge weights by $w^L$ (see below for a formal definition).

Given any collection of paths $P$, we can now remove all edges of weight $0$ and add
the edges $e_{p+1}, \ldots, e_{q-1}$ to obtain a new collection of paths.
This does not change the weight with respect to $w^L$. The only
edges that we cannot force to be in $H$ are $e_{p}$ and $e_{q}$. In order to
maintain a good approximation ratio we have to make sure that both are light
with respect to all objectives. This is where Lemma~\ref{lem:somelight} comes
into play: If we choose $p$ and $q$ sufficiently large, then $e_p$ and $e_q$ are light.

To fix notation, let $K$ be the of edges that we want to keep of $\tilde H$.
Let $L = L(K) = \{v \in V \mid \exists e \in K: v \in e\}$ be the set of vertices
incident to edges in $K$. Let $w^{-L}$ be defined by
\[
    w^{-L} (e)
  = \begin{cases}
    w(e) & \text{if $e \cap L = \emptyset$ and} \\
    0    & \text{if $e \cap L \neq \emptyset$.}
    \end{cases}
\]
This means that the weight of edges incident to $L$ is set to $0$, which
includes the edges in $K$.
For a bound $\beta$, $w^{-L, \beta}(e)$ is defined accordingly: $w^{-L, \beta}(e) = w^{-L}(e)$
if $w^{-L}(e) \leq \beta$ and $w^{-L, \beta}(e) = 0$ otherwise.

There are more edges of $\tilde H \setminus K$ whose weight is affected by
$w^{-L}$: Let
\[
  T = T(K) = \{e \in \tilde H \mid e \notin K, e \cap L(K) \neq \emptyset\}
\]
be the set of edges that have exactly one endpoint in $L$. The weights of these
edges are set to $0$ in $w^{-L}$, but we cannot force them to be in any cycle
cover as mentioned above. (They are the edges of $\tilde H$ that are adjacent to $K$
but not in $K$.)

The following lemma is the undirected counterpart of Lemma~\ref{lem:dirsubset}.
In particular, it takes care of the set $T$, which is only needed for the
analysis and not for the algorithm.
Let
\[
    g(k,\eps)
  =       k
    \cdot \left\lceil
             \frac{\log(1/3)}
                  {\log\bigl(1-\eta_{k,\eps/3} + (\frac \eps3)^3\bigr)}
          \right\rceil.
\]
The function $g$ plays the same role as the function $f$ in the previous subsection.

\begin{lemma}
\label{lem:undsubset}
  For every Hamiltonian cycle $\tilde H$ and every $\eps > 0$, there exists a subset
  $K \subseteq \tilde H$ of at most $g(k,\eps)$ paths, each of length at most
  $\frac{12k}{\eps} g(k, \eps)$ with the following properties: Let $L = L(K)$ and
  $T = T(K)$. For every $i \in [k]$, we have
  \begin{enumerate}
  \item $w_i(K) \geq (2/3 - \eps) \cdot w_i(\tilde H)$ or \label{uenough}
  \item $w_i^{-L}(e) \leq \bigl(\eta_{k, \eps/3} - (\frac \eps3)^3\bigr) \cdot
     w_i^{-L}(\tilde H)$ for all $e \in \tilde H$. \label{ulight}
  \end{enumerate}
  Furthermore, we have $w(T) \leq \frac{\eps}3 \cdot w(\tilde H)$.
\end{lemma}

\begin{proof}
The proof is similar to the proof of Lemma~\ref{lem:dirsubset}, but slightly
more involved since we have to keep track of the set $T$.
Let $\xi = \eta_{k, \eps/3} - (\frac \eps3)^3$.

We put paths one by one into $K$ until the properties are met.
If not all $i$ fulfill Property~\ref{enough} or~\ref{light} yet, then we have to
add another path to $K$.
In this case, there exists an edge $e_0 \in \tilde H_{-K}$ and an $i \in [k]$ such that
$w_i(e_0) > \xi \cdot w_i^{-K}(\tilde H)$ and
$w_i(K) < (\frac 23 - \eps) \cdot w_i(\tilde H)$.
We call $i$ the winner of round $j$ and add it to $K$.
Furthermore, we extend $e_0$ to both sides to obtain paths
$e_{p}, e_{p+1}, \ldots, e_0, \ldots, e_{q}$ for some $p < 0 < q$. Here, $p$ is chosen
such that either $w(e_{p}) \leq \frac{\eps}{6g(k, \eps)} \cdot w(\tilde H)$ or $e_p \in K \cup T$, and $q$ is chosen
analogously.
We can choose $-p, q \leq \frac{6kg(k, \eps)}{\eps}$.
We put $e_p$ and $e_q$ into $T$ and
$e_{p+1}, \ldots, e_{q-1}$ into $K$.

Now let $K$ be the set of edges before an iteration, and let
$K' = K \cup \{e_{p+1}, \ldots, e_{q-1}\}$ be the set of edges afterwards.
We have $w_i^{-K'} (\tilde H) + w_i(e_0) \leq w_i^{-K}(\tilde H)$.
Since $w_i(e_0) > \xi w_i^{-K}$, this yields
$w_i^{-K'}(\tilde H) < (1-\xi) \cdot w_i^{-K}(\tilde H)$.
Thus, if $i$ is the winner in $\ell$ rounds, and the resulting set of edges is $K$,
then $w_i^{-K}(\tilde H) \leq (1-\xi)^\ell \cdot w_i(\tilde H)$.
If $\ell = \lceil \log_{1-\xi} 1/3\rceil = g(k,\eps)/k$, then
\begin{equation}
  w_i^{-K}(\tilde H) \leq \frac 13 \cdot w_i(\tilde H) \: .
  \label{equ:nomuchleft}
\end{equation}
Since every round has a winner, after at most $g(k, \eps)$,
all properties are met. This is
because $w(\tilde H) = w(K) + w(T) + w^{-K}(\tilde H)$.
Any edge $e \in T$ fulfills $w(e) \leq \frac{\eps}{6 g(k, \eps)} \cdot w(\tilde H)$.
Since we put at most $2 g(k, \eps)$ edges into $T$, we have $w(T) \leq \frac{\eps}3 \cdot w(\tilde H)$.
Together with \eqref{equ:nomuchleft}, this concludes the proof.
\end{proof}

Again, given that we have the correct set $K \subseteq \tilde H$, we have to find
a cycle cover that approximates $\tilde H$ with respect to $w^{-L(K)}$.
That this can be done is shown in the following undirected counterpart of Lemma~\ref{lem:betapower}.

\begin{lemma}
\label{lem:betapowerund}
  Let $\nu > 0$ be arbitrary. Let $\tilde H$ be an undirected Hamiltonian cycle
  with $w(e) \leq \bigl(\eta_{k,\nu}  - \nu^3\bigr) \cdot w(\tilde H)$ for all
  $e \in \tilde H$. Let $\beta = \beta(\tilde H)$, and let $\pc$ be a
  $(1-\nu)$ approximate Pareto curve of cycle covers with respect to~$w^\beta$.

  Then $\pc$ contains a cycle cover $C$ with
  $w^\beta(C) \geq (1-\nu) \cdot w(\tilde H)$ and
  $w^\beta(e) \leq \eta_{k,\nu} \cdot w^\beta(C)$ for all $e \in C$. This
  cycle cover $C$ yields a decomposition $P \subseteq C$ with
  $w(P) \geq (\frac 23 - 2\nu) \cdot w(\tilde H)$.
\end{lemma}

\begin{proof}
The proof is almost identical to the proof of
Lemma~\ref{lem:betapower} and thus omitted.
\end{proof}

Now we have everything for algorithm \spa\
(Algorithm~\ref{alg:maxstsp}) and Theorem~\ref{thm:mcstsp}.

\begin{algorithm}[t]
\begin{algorithmic}[1]
\item[] $\ptsp \leftarrow \stspalg(G, w, \eps)$
\Input undirected complete graph $G=(V,E)$, $w: E \to \ratio_+^k$, $\eps > 0$
\Output $(\frac 23 - \eps)$ approximate Pareto curve \ptsp\
   for \kcmaxstsp k with probability at least $\frac 12$
\ForAll{$K \subseteq E$ that consist of at most $g(k, \eps)$ paths of
     length at most $\leq 12kg(k, \eps)/\eps$}
\ForAll{bounds $\beta$}
   \State $L \leftarrow L(K)$
      \State $\pc_{L,\beta} \leftarrow
           \algcc(G,w^{-L,\beta}, \frac{\eps}3,
                       \frac{1}{2n^{2k+12k g^2(k, \eps)/\eps}})$
           \label{line:cccund}
      \ForAll{$I \subseteq [k]$ and $C \in \pc_{L,\beta}$}
         \If{$w_I^{-L,\beta}(e) \leq \eta_{k,\eps/3}
              \cdot w_I^{-L,\beta}(C)$ for all $e \in C$}
         \State $P \leftarrow
              \lightweight(C, w^{-L,\beta}_I, \frac\eps4)$
         \State remove edges of weight $0$ from $P$ to get $P'$
         \State add edges to $K \cup P'$ to get a Hamiltonian cycle $H$; add $H$ to $\ptsp$
      \EndIf
   \EndFor
\EndFor
\EndFor
\end{algorithmic}
\caption{Approximation algorithm for \kcmaxstsp k.}
\label{alg:maxstsp}
\end{algorithm}

\begin{theorem}
  \label{thm:mcstsp}
  For every $k \geq 2$ and $\eps > 0$, \spa\ is a randomized
  $\frac 23 - \eps$ approximation for $k$-criteria \maxstsp\ whose running-time
  is polynomial in the input size.
\end{theorem}

\begin{proof}
  Let us first concentrate on the approximation ratio. Consider any Hamiltonian
  cycle $\tilde H$. We have to show that $\ptsp$ contains a Hamiltonian cycle
  $H$ with $w(H) \geq (2/3 - \eps) w(\tilde H)$. According to Lemma~\ref{lem:undsubset},
  there exists a set $K \subseteq \tilde H$ and a subset $I \subseteq [k]$ of
  the objectives such that the following holds:
  \begin{itemize}
  \item For every $i \in [k] \setminus I$, we have
     $w_i(K) \geq (2/3 - \eps) \cdot w_i(\tilde H)$.
  \item For every $i \in I$ and for every edge $e \in H_{-K}$, we have
     $w_i(e) \leq \bigl(\eta_{k, \eps/3} - (\frac \eps 3)^3\bigr) \cdot w_i(H_{-K})$.
  \end{itemize}

  Let $L = L(K)$, and let $\beta = \beta(\tilde H)$ with respect to the edge weights
  $w^{-L}$. According to Lemma~\ref{lem:betapowerund} with edge weights $w^{-L}$ and $\nu = \eps/3$,
  the set $\pc_{L, \beta}$
  contains a cycle cover $C$ from which we get a decomposition $P \subseteq C$ with
  $w_i^{-L}(P') = w_i^{-L}(P) \geq (\frac 23 - \frac{2\eps}3) \cdot w_i^{-L}(\tilde H)$. ($P'$
  is obtained from $P$ by removing edges of weight $0$.)
  Since $w(\tilde H) = w^{-L}(\tilde H) + w(T) + w(K)$ and $w(T) \leq \frac{\eps}3 \cdot w(\tilde H)$
  according to Lemma~\ref{lem:undsubset}, we have
  $w(H) \geq w(P' \cup K) \geq (2/3 - \eps) \cdot w(\tilde H)$, which is enough.

  The error probabilities of the randomized computations in
  line~\ref{line:cccund} sum up to at most $1/2$ since there are at most
  $n^{2k}$ bounds $\beta$ and at most $n^{6kg^2(k, \eps)}$ subsets $K$. By a
  union bound, the probability that one of the computations fails is thus at most
  $1/2$. The running time follows since $g(k, \eps)$ is independent of $n$,
  the number of bounds $\beta$ is bounded by $n^{2k}$, and the number of $I$ is
  $2^k$.
\end{proof}

\section{Deterministic Approximations for \kcmaxstsp 2}
\label{sec:detbi}

The algorithms presented in the previous section are randomized due to the
computation of approximate Pareto curves of cycle covers. So are most
approximation algorithms for multi-criteria TSP with the exception
of a simple $(2+\eps)$ approximation for
\kcminstsp k~\cite{MantheyRam:MultiCritTSP:2009}.

As a first step towards deterministic approximation algorithms for multi-criteria
Max-TSP,
we present a deterministic $7/27 \approx 0.26$ approximation for
\kcmaxstsp 2. The key insight for the results of this section is the following
lemma.

\begin{lemma}
\label{lem:hammatch}
  Let $M$ be a (not necessarily perfect) matching, let $H$ be a collection of paths or a Hamiltonian
  cycle, and let $w$ be edge weights ($w$ is a single-criterion function).
  Then there exists a subset $P \subseteq H$
  such that
  \begin{enumerate}[(i)]
  \item $P \cup M$ is a collection of paths or a Hamiltonian cycle (we call $P$
        in this case an $M$-feasible set) and
  \item $w(P) \geq w(H)/3$.
  \end{enumerate}
\end{lemma}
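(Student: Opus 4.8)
The plan is to prove the stronger statement that the edge set $E(H)$ can be partitioned into three \emph{$M$-feasible} sets $P_1,P_2,P_3$, where $M$-feasible means that $P_j\cup M$ is a collection of paths or a Hamiltonian cycle, while $M$ itself is kept in every class. Once this is established, at least one class satisfies $w(P_j)\geq w(H)/3$, because $w(P_1)+w(P_2)+w(P_3)=w(H)$, and that class is the desired $P$. So everything reduces to exhibiting such a three-colouring of the edges of $H$.

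Each class $P_j\cup M$ must meet two requirements: a \emph{degree} condition (every vertex has degree at most two) and an \emph{acyclicity} condition (no cycle, except that all of $P_j\cup M$ may form a single Hamiltonian cycle). First I would dispose of the degrees. An unmatched vertex carries no $M$-edge, so its at most two $H$-edges never cause a violation and may be coloured freely; a matched vertex carries one $M$-edge, so it may keep at most one $H$-edge per class, whence its (at most two) $H$-edges must receive distinct colours. Encoding this as a \emph{conflict graph} $X$ on the vertex set $E(H)$, in which two $H$-edges are adjacent iff they share a \emph{matched} vertex, each $H$-edge meets at most one other $H$-edge at each endpoint, so $\Delta(X)\leq 2$. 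Hence $X$ is a disjoint union of paths and cycles, properly colourable with two colours apart from odd cycles and with three in all cases, leaving one colour essentially in reserve.

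The acyclicity condition is the main obstacle, since a monochromatic cycle can appear even when the degree condition holds: it consists of maximal runs of equally coloured $H$-edges linked by single $M$-edges, one $H$-edge and the $M$-edge meeting at each matched junction. To suppress these I would colour $H$ by traversing each of its paths and its cycle in order, respecting the conflicts at matched vertices, and spend the reserve colour to break any run that would otherwise close a short cycle through $M$: recolouring one interior edge of such a run severs the monochromatic connection, and because the reserve colour is used only sparsely one can arrange that this creates neither a new degree conflict nor a new cycle. The relaxation in~(i) that a single spanning cycle is permitted absorbs the sole configuration ($H$ being one cycle with every vertex matched) in which the cycle cannot and need not be broken. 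Carrying out this cycle-breaking rigorously — by induction on the number of offending cycles, or by a direct component-by-component analysis of $H\cup M$, which has maximum degree three — is where the real work lies; the degree bookkeeping above is routine by comparison.
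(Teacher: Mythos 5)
Your reduction to a three-way partition is sound in principle (if $E(H)=P_1\cup P_2\cup P_3$ with each $P_j$ $M$-feasible, the heaviest class works), and your degree analysis via the conflict graph $X$ with $\Delta(X)\le 2$ is correct. But the proof has a genuine gap exactly where you place it: the acyclicity condition. First, the claimed ``reserve colour'' need not exist -- when $H$ is an odd Hamiltonian cycle and $M$ is a perfect matching disjoint from $H$, the conflict graph $X$ is an odd cycle and already consumes all three colours for degree reasons alone. Second, the cycles you must avoid are not local: a cycle in $P_j\cup M$ can alternate between long stretches of $M$-edges and far-apart $j$-coloured $H$-edges, so a traversal of $H$ cannot detect them by looking at ``runs.'' Concretely, when colouring an edge $e_i=\{v_i,v_{i+1}\}$ greedily, one colour can be forbidden by the degree constraint at $v_i$ (the colour of $e_{i-1}$) while each of the other two is forbidden because $v_i$ and $v_{i+1}$ already lie in the same component of that colour class union $M$; nothing in your sketch rules this out, so ``one can arrange that this creates neither a new degree conflict nor a new cycle'' is an assertion, not an argument. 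Finally, the partition statement you reduce to is strictly stronger than the lemma and is nowhere established; it may well be true, but proving it is at least as hard as proving the lemma itself, so the proposal as written does not constitute a proof.

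For comparison, the paper avoids the partition question entirely and argues locally and weight-awarely: it inducts on $|M|$ and $|H|$, first moving $M\cap H$ into $P$ for free, and otherwise taking a globally heaviest edge $e\in H$, keeping $e$, discarding its at most two $H$-neighbours $f_1,f_2$ (so that $w(e)\ge w(\{e,f_1,f_2\})/3$), and contracting the two matching edges incident to $e$ into one to preserve the invariant that $M$ is a matching; feasibility of $P$ then follows because un-contracting cannot create a cycle. The factor $3$ arises there from ``keep one of three consecutive edges, the heaviest,'' not from a three-colouring. If you want to salvage your approach, you would need either a correct global argument for the three-partition (e.g.\ an induction on the components of $H\cup M$, which has maximum degree three, with explicit handling of alternating cycles) or to fall back on a weight-aware selection as the paper does.
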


\begin{proof}
  We prove the lemma by induction on $|M| + |H|$. For $|M| + |H| = 0$, the
  lemma is trivially true.
  As induction hypothesis, we assume the lemma holds for all $M$ and $H$ with
  $|M| + |H| < \ell$, and we want to prove it for $|M| + |H| = \ell$.

  We distinguish two cases. The first case is $M \cap H \neq \emptyset$. Then
  we set $\tilde P = M \cap H$ and $H' = H \setminus M$.
  By induction
  hypothesis, there exists an $M$-feasible $P' \subseteq H'$ such that $w(P') \geq w(H')/3$.
  Since $\tilde P \subseteq M$, also $P = P' \cup \tilde P$ is
  $M$-feasible. Observing that
  $w(P) = w(P') + w(\tilde P) \geq w(H \cap M) + w(H \setminus M)/3 \geq w(H)/3$
  completes this case.
  
  The second case is that $M$ and $H$ are disjoint. Let
  $e = \argmax\{w(e) \mid e \in H\}$ be a heaviest edge of $H$, and let
  $f_1, f_2 \in H$ be the two edges of $H$ that are incident to $e$. Let
  $H' = H \setminus \{e, f_1, f_2\}$. (It can happen that $f_1$ or $f_2$ do not
  exist, namely if $H$ is not a Hamiltonian cycle but a collection of paths. But
  this is fine.)

  Let us first treat the case that $e$ is incident to two edges $z_1, z_2 \in M$
  of the matching.
  Then we contract $z_1$ and $z_2$ to a single edge $z$ that connects the two
  endpoints of $z_1$ and $z_2$ that are not incident to $e$ and remove the two
  vertices incident to $e$ (see Figure~\ref{fig:contraction}). Let
  $M' = (M \setminus \{z_1, z_1\}) \cup \{z\}$. Since $e, f_1, f_2$ are removed,
  $H'$ and $M'$ are a valid instance for the lemma, i.e., $M'$ is a matching and
  $H'$ is a collection of paths ($H'$ cannot be a Hamiltonian cycle). We can
  apply the induction hypothesis since $|M'| + |H'| < \ell$.

  In this way, we obtain an $M'$-feasible set $P' \subseteq H'$ with
  $w(P') \geq w(H')/3$. Set $P = P' \cup \{e\}$. Since
  $w(e) \geq w(\{e, f_1, f_2\})/3$, we have $w(P) \geq w(H)/3$. Since $P'$
  is $M'$-feasible, the set $P$ is $M$-feasible by construction.

  What remains to be considered is the case the $e$ is not incident to two edges
  $z_1, z_2 \in M$. Then we consider the shortest path in
  $e_1, \ldots, e_q \in H$ of edges in $H$ that includes $e$ such that $e_1$ and
  $e_q$ are incident to any edges $z_1, z_2 \in M$. The reasoning above holds in
  exactly the same way if replace $e$ by the path $e_1, \ldots, e_q$, and we put
  $e_1, \ldots, e_q$ into $P$. If no such path exists, then either
  $M = \emptyset$, which is easy to handle, or the path containing $e$
  ends somewhere at a vertex of degree $1$ in $H \cup M$. In the latter case, we
  can simply put the whole path into~$P$.
\end{proof}

\begin{figure}[t]
\centering
\subfigure[]{\label{sfig:c1} \includegraphics{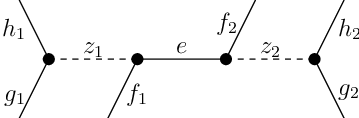}}
\qquad\qquad
\subfigure[]{\label{sfig:c2} \includegraphics{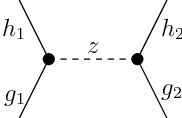}}
\caption{Contraction for the proof of Lemma~\ref{lem:hammatch}:
         We keep $e$ and
         remove $f_1, f_2$ \eqref{sfig:c1}. Then we can contract $z_1$ and $z_2$ to $z$
         \eqref{sfig:c2}.}
\label{fig:contraction}
\end{figure}

Lemma~\ref{lem:hammatch} yields tight bounds for the existence of approximate
Pareto curves with only a single element. This is the purpose of the following
theorem.

\begin{theorem}
\label{thm:twoham}
\begin{enumerate}[(a)]
\item \label{part1} For every undirected complete graph $G$ with edge weights
      $w_1$ and $w_2$, there exists a Hamiltonian cycle $H$ such that $\{H\}$ is
      a $1/3$ approximate Pareto curve for \kcmaxstsp 2.

\item \label{part2} Part~\eqref{part1} is tight: There exists a graph $G$ with
      edge weights $w_1$ and $w_2$ such that, for all $\eps > 0$, no single
      Hamiltonian tour of $G$ is a $(1/3 + \eps)$ approximate Pareto curve.
\end{enumerate}
\end{theorem}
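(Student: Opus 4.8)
The plan is to prove both parts of Theorem~\ref{thm:twoham} using Lemma~\ref{lem:hammatch} as the engine for part~\eqref{part1} and a direct gadget construction for the tightness in part~\eqref{part2}.

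\medskip

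\textbf{Part~\eqref{part1}.} My approach is to build a single Hamiltonian cycle by merging two good single-criterion tours through the lemma. First I would compute a Hamiltonian cycle $H_1$ maximizing $w_1$ and a Hamiltonian cycle $H_2$ maximizing $w_2$ (conceptually, since we only need existence, not efficiency here). The idea is to extract a heavy portion of $H_2$ that is compatible with $H_1$. I would apply Lemma~\ref{lem:hammatch} with the roles set so that one tour plays the part of $M$ and the other plays the part of $H$: take a matching $M \subseteq H_1$ (for instance, roughly half the edges of $H_1$, chosen to carry a large $w_1$-weight) and set $H = H_2$. The lemma then yields a set $P \subseteq H_2$ with $w(P) \geq w(H_2)/3$ such that $P \cup M$ is a collection of paths, which can be completed to a Hamiltonian cycle $H$ by adding connecting edges. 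Since weights are nonnegative, the completed cycle $H$ satisfies $w_2(H) \geq w_2(P) \geq w_2(H_2)/3 \geq w_2(\tilde H)/3$ for every Hamiltonian cycle $\tilde H$, because $H_2$ is $w_2$-optimal; the matching $M$ contributes to $w_1(H)$, giving the symmetric bound on the first criterion. The main subtlety I anticipate here is balancing the two objectives simultaneously with a single application: I would need to choose $M$ and the tours carefully so that both $w_1(H) \geq w_1(\tilde H)/3$ and $w_2(H) \geq w_2(\tilde H)/3$ hold for the \emph{same} $H$. This likely requires invoking the lemma once to secure one objective through $P$ while guaranteeing the other objective through the retained matching $M$, using the fact that $w(P) \geq w(H)/3$ controls the combined weight vector componentwise.

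\medskip

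\textbf{Part~\eqref{part2}.} For the tightness, the plan is to exhibit a small explicit instance where the two objectives are in sharp tension, so that no single cycle can do better than a $1/3$ fraction on both criteria simultaneously. The natural candidate is an instance built so that the $w_1$-optimum and $w_2$-optimum are achieved by two essentially disjoint sets of edges, and any Hamiltonian cycle must split its edge budget among three competing ``blocks'' of weight. I would construct $G$ with a number of vertices and edge weights arranged so that the total $w_1$-weight and total $w_2$-weight achievable by any tour are dominated by three roughly equal contributions that cannot all be captured at once. One clean way is to use a construction analogous to the triangle lower bound mentioned after Theorem~\ref{thm:improvedec} (a cycle of length three with equal-weight edges shows one loses a third), adapted to two objectives so that pushing $w_1$ above $1/3$ of its maximum forces $w_2$ below $1/3 + \eps$ of its maximum, and vice versa. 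I would then argue that for any Hamiltonian cycle $H$ and any $\eps > 0$, at least one of $w_1(H) < (1/3+\eps) \cdot \max_{\tilde H} w_1(\tilde H)$ or $w_2(H) < (1/3+\eps) \cdot \max_{\tilde H} w_2(\tilde H)$ must hold, so $\{H\}$ fails to be a $(1/3+\eps)$ approximate Pareto curve.

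\medskip

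The hard part will be part~\eqref{part1}: getting a single cycle to approximate \emph{both} objectives at ratio $1/3$ from one invocation of Lemma~\ref{lem:hammatch}. The lemma guarantees $w(P) \geq w(H)/3$ as a vector inequality, which is exactly what couples the two criteria, but threading the optimal tours $H_1, H_2$ and the matching $M$ so that $P \cup M$ simultaneously inherits a third of each optimum requires care about which edges are forced in and which weight is potentially lost when completing $P \cup M$ to a full cycle. I expect the completion step—adding connecting edges with possibly zero weight—to be harmless since weights are nonnegative, so the losses are confined to edges genuinely absent from $P \cup M$, and those are controlled by the vector bound $w(P) \geq w(H)/3$. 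For part~\eqref{part2}, the obstacle is purely in verifying that the gadget's global optima and the per-tour trade-off behave as claimed; this reduces to a finite case analysis of the few structurally distinct Hamiltonian cycles on the small gadget, which I would carry out by direct enumeration.
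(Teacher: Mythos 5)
Your plan for part~(a) is exactly the paper's proof: take a matching $M \subseteq H_1$ with $w_1(M) \geq w_1(H_1)/3$ (which exists since a Hamiltonian cycle splits into at most three matchings), apply Lemma~\ref{lem:hammatch} with $H = H_2$ to get an $M$-feasible $P \subseteq H_2$ with $w(P) \geq w(H_2)/3$, and complete $M \cup P$ to a tour $H'$. The ``subtlety'' you worry about resolves just as you suspect: $M$ alone certifies $w_1(H') \geq w_1(H_1)/3$ and $P$ alone certifies $w_2(H') \geq w_2(H_2)/3$, and since $M \cup P \subseteq H'$ and weights are nonnegative, both bounds hold for the same cycle. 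No further balancing is needed; only one invocation of the lemma occurs. This part of your proposal is correct and essentially identical to the paper.

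Part~(b) has a genuine gap: the entire content of the tightness claim is the explicit gadget, and you never produce one. Moreover, your proposed starting point --- adapting the single-objective triangle lower bound --- does not work as stated: on a triangle (or any instance where the competing weight classes can each be fully collected by some tour of the same length) there is no forced trade-off, since a Hamiltonian cycle on three vertices uses all three edges. What is needed is an instance where the edge-count budget of a tour is strictly smaller than the number of edges required to capture more than $1/3$ of both objectives. The paper's gadget is a five-vertex graph in which three ``solid'' edges have weight $(1,0)$, three ``dashed'' edges have weight $(0,1)$, and all remaining edges have weight $(0,0)$, arranged so that the solid edges extend to a tour of weight $(3,0)$ and the dashed edges to a tour of weight $(0,3)$, while one distinguished vertex is incident only to weight-$(0,0)$ edges. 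A $(1/3+\eps)$ approximation would then need at least two solid edges, at least two dashed edges, and two zero-weight edges to cover the distinguished vertex --- six edges in a five-edge tour, a contradiction. Your appeal to ``finite case analysis of a small gadget'' is the right shape of argument, but without exhibiting a gadget for which that analysis succeeds, part~(b) is not proved.
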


Before embarking on the proof of the theorem, let us remark
that single-element approximate Pareto curves exist for no other
variant of multi-criteria TSP than \kcmaxstsp 2: For \kcmaxstsp k for
$k \geq 3$, we can consider a vertex incident to three edges of weight
$(1,0,0)$, $(0,1,0)$, and $(0,0,1)$, respectively. All other edges of the graph
have weight $0$. Then no single Hamiltonian cycle can have positive weight with
respect to all three objectives simultaneously. Similarly, no such result is
possible for \kcmaxatsp k and for \kcmintsp k for any $k \geq 2$.

\begin{proof}
  Let $H_1$ and $H_2$ be Hamiltonian cycles of $G$ such that $H_1$ maximizes $w_1$
  and $H_2$ maximizes $w_2$. Then there exists a matching $M \subseteq H_1$ with
  $w(M) \geq w(H_1)/3$. (We can actually get $w(H_1)/2$ if $G$ has an even
  number of vertices and $\frac{n-1}{2n} \cdot w(H_1)$ if the number $n$ of
  $G$'s vertices is odd. This, however, does not improve the result.) We apply
  Lemma~\ref{lem:hammatch} with $H = H_2$ and obtain an $M$-feasible set
  $P \subseteq H_2$. From $M$ and $P$, we obtain a Hamiltonian cycle
  $H' \supseteq M \cup P$: Either $M \cup P$ is already a Hamiltonian cycle,
  then nothing has to be done. Or $M \cup P$ is a collection of paths. Then we
  add appropriate edges to obtain $H'$. We claim that $\{H'\}$ is a $1/3$
  approximate Pareto curve: Let $\tilde H$ be any Hamiltonian tour. Then
  \[
    w_1(H') \geq w_1(M) \geq w_1(H_1)/3 \geq w_1(\tilde H)/3
  \]
  and
  \[
    w_2(H') \geq w_2(P) \geq w_2(H_2)/3 \geq w_2(\tilde H)/3.
  \]

  To finish the proof, let us show that ratio $1/3$ is tight.
  Consider the graph in Figure~\ref{fig:pentagon}.
  The solid edges plus two dotted edges form a Hamiltonian cycle of weight
  $(3,0)$. The dashed edges plus two other dotted edges form a Hamiltonian cycle of
  weight $(0,3)$. To get a single-element $(\frac 13 + \eps)$-approximate Pareto curve $\{H\}$, we must
  have $w_i(H) \geq 1 + 3\eps$ for $i \in \{1,2\}$. Thus, the Hamiltonian cycle $H$ must contain two solid edges and two
  dashed edges, which is impossible.
\end{proof}

\begin{figure}[t]
\centering
\includegraphics{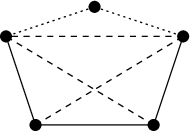}
\caption{The graph for Theorem~\ref{thm:twoham}\eqref{part2}. Solid edges
are of weight $(1,0)$, dashed edges are of weight $(0,1)$, dashed edges
have weight $(0,0)$.}
\label{fig:pentagon}
\end{figure}

\begin{algorithm}[t]
\begin{algorithmic}[1]
\item[] $\ptsp \leftarrow \twoalg(G, w_1, w_2)$
\Input undirected complete graph $G=(V,E)$, edge weights
       $w_1, w_2:  E \to \ratio_+^k$
\Output a $7/27$ approximate Pareto curve $H$ for \kcmaxstsp 2
\State compute a maximum-weight matching $M$ with respect to $w_1$
\State compute a $7/9$ approximate Hamiltonian cycle $H_2$ with respect
       to $w_2$
\State $P \leftarrow H_2 \cap M$
\State $M' \leftarrow M$
\State $H_2 \leftarrow H_2 \setminus P$
\While{$H_2 \neq \emptyset$}
   \State $e \leftarrow \argmax\{w_2(e') \mid e' \in H_2\}$
   \State extend $e$ to a path $e_1, \ldots, e_q \in H_2$ such that only $e_1$
          and $e_q$ are incident to edges $z_1, z_2 \in M'$ or the path cannot be
          extended anymore
   \State $P \leftarrow P \cup \{e_1, \ldots, e_q\}$
   \State $H_2 \leftarrow H_2 \setminus \{e_1, \ldots, e_q\}$
   \If{$z_1$ or $z_2$ exists}
       \State let $f_1, f_2 \in H_2$ be the two edges extending the path if they
              exist
       \State $H_2 \leftarrow H_2 \setminus \{f_1, f_2\}$
       \If{both $z_1$ and $z_2$ exist}
          \State contract $z_1$ and $z_2$ to $z$
          \State $M' \leftarrow (M' \setminus \{z_1, z_2\}) \cup \{z\}$
   \EndIf
   \EndIf
\EndWhile
\State let $H$ be a Hamiltonian cycle obtained from $P \cup M$
\end{algorithmic}
\caption{Approximation algorithm for \kcmaxstsp 2.}
\label{alg:maxtwo}
\end{algorithm}

Lemma~\ref{lem:hammatch} and Theorem~\ref{thm:twoham} are constructive in the
sense that, given a Hamiltonian cycle $H_2$ that maximizes $w_2$, the tour $H$
can be computed in polynomial time. A matching $M$ with
$w_1(M) \geq w_1(H_1)/3$ can be computed in cubic time. However, since we cannot
compute an optimal $H_2$ efficiently, the results cannot be exploited directly to get an algorithm. Instead, we
use an approximation algorithm for finding a Hamiltonian tour with as much
weight with respect to $w_2$ as possible. Using the $7/9$ approximation
algorithm for \maxstsp~\cite{PaluchEA:MaxTSP79:2009}, we obtain
Algorithm~\ref{alg:maxtwo} (which is an algorithmic version of
Lemma~\ref{lem:hammatch}) and the following theorem.

\begin{theorem}
  \mt\ is a deterministic $7/27$ approximation algorithm with running-time
  $O(n^3)$ for \kcmaxstsp 2.
\end{theorem}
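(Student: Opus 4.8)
The plan is to present Algorithm~\ref{alg:maxtwo} as the constructive counterpart of the existence argument in Theorem~\ref{thm:twoham}\eqref{part1}, the only difference being that the uncomputable $w_2$-optimal Hamiltonian cycle is replaced by a $61/81$-approximate one. Accordingly, I would split the proof into three parts: the approximation ratio, the correctness of the while-loop as an implementation of Lemma~\ref{lem:hammatch}, and the running time.

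For the ratio, fix an arbitrary Hamiltonian cycle $\tilde H$. First I would observe that a maximum-weight matching $M$ with respect to $w_1$ satisfies $w_1(M) \geq w_1(\tilde H)/3$: the edge set of any Hamiltonian cycle partitions into at most three matchings (two if $n$ is even, three if $n$ is odd), so the heaviest of them, and hence $M$, carries at least a third of the weight. Second, the $61/81$ approximation for mono-criterion \maxstsp~\cite{ChenEA:ImprovedMaxTSP:2005} guarantees $w_2(H_2) \geq \frac{61}{81} \cdot w_2(\tilde H)$. The loop extracts an $M$-feasible set $P \subseteq H_2$ with $w_2(P) \geq w_2(H_2)/3$, so the final Hamiltonian cycle $H \supseteq M \cup P$ satisfies
\[
  w_1(H) \geq w_1(M) \geq \tfrac13 \cdot w_1(\tilde H) \geq \tfrac{61}{243} \cdot w_1(\tilde H)
  \quad\text{and}\quad
  w_2(H) \geq w_2(P) \geq \tfrac13 \cdot w_2(H_2) \geq \tfrac{61}{243} \cdot w_2(\tilde H),
\]
using $\tfrac13 \geq \tfrac{61}{243}$. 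Hence $\{H\}$ is a $61/243$ approximate Pareto curve.

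The core of the argument is to verify that the while-loop faithfully carries out the induction in the proof of Lemma~\ref{lem:hammatch}. Line~3 disposes of the case $M \cap H_2 \neq \emptyset$ by putting the common edges into $P$ up front; the loop then repeatedly selects a heaviest remaining edge $e$, extends it to the maximal path $e_1,\dots,e_q$ whose endpoints alone touch matching edges $z_1,z_2$, and discards the two neighboring edges $f_1,f_2$. This is precisely the heavy-edge step of the lemma, and contracting $z_1,z_2$ to $z$ reproduces the smaller instance $(H',M')$ with $|M'| < |M|$. I would argue, using the same invariant as in the lemma, that $M' \cup P'$ never closes a non-Hamiltonian cycle, so $M \cup P$ stays $M$-feasible throughout, and that discarding $f_1,f_2$ while keeping the whole path $e_1,\dots,e_q$ loses at most a factor $1/3$ in $w_2$, yielding $w_2(P) \geq w_2(H_2)/3$. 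The final completion of the collection of paths $M \cup P$ to a Hamiltonian cycle only adds edges and hence decreases neither objective.

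The main obstacle is the sharp running-time bound of $O(n^3)$ rather than merely ``polynomial''. Computing a maximum-weight matching takes $O(n^3)$, and the $61/81$ approximation of~\cite{ChenEA:ImprovedMaxTSP:2005} runs within this bound; the while-loop touches each edge of $H_2$ a constant number of times and each contraction can be performed in constant amortized time with a suitable union-find-style data structure for $M'$, so the loop and the final completion contribute only $O(n)$. I expect the delicate points to be (i) maintaining the contraction of $z_1,z_2$ so that the cycle-freeness invariant is preserved across iterations and (ii) confirming that the path-extension step, including the boundary cases in which $z_1$, $z_2$, $f_1$, or $f_2$ fail to exist (handled by the nested \textbf{if} statements), matches exactly the case distinction of Lemma~\ref{lem:hammatch}.
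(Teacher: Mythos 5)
Your proposal is correct and follows essentially the same route as the paper: it combines the maximum-weight matching for $w_1$ (worth at least a third of any tour's $w_1$-weight), the $61/81$ approximation of Chen et al.\ for $w_2$, and the extraction of an $M$-feasible set via Lemma~\ref{lem:hammatch}, yielding the ratio $\frac{61}{81}\cdot\frac 13 = \frac{61}{243}$ and the $O(n^3)$ bound dominated by the matching and the mono-criterion approximation. The paper's own proof is just a two-sentence appeal to these same ingredients; your version merely spells out the details (in particular the correspondence between the while-loop and the induction of Lemma~\ref{lem:hammatch}) that the paper leaves implicit.
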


\begin{proof}
The running-time is dominated by the running-time of the $7/9$ approximation
for \maxstsp\ by Paluch et al.~\cite{PaluchEA:MaxTSP79:2009} and the time for
computing the matching, both of which is $O(n^3)$. The approximation ratio
follows from $\frac{7}{9} \cdot \frac 13 = \frac{7}{27}$.
\end{proof}

\section{Approximation Algorithm for Multi-Criteria Min-ATSP}
\label{sec:atsp}

Now we turn to \kcminatsp k, i.e., Hamiltonian cycles of \emph{minimum}
weight are sought in directed graphs. Algorithm~\ref{alg:minatsp} is an
adaptation of the algorithm of Frieze et al.~\cite{FriezeEA:TSP:1982} to
multi-criteria ATSP. Therefore, we briefly describe their algorithm: We compute
a cycle cover of minimum weight. If this cycle cover is already a Hamiltonian
cycle, then we are done. Otherwise, we choose an arbitrary vertex from every
cycle. Then we proceed recursively on the subset of vertices thus chosen to
obtain a Hamiltonian cycle that contains all these vertices. The cycle cover
plus this Hamiltonian cycle form an Eulerian graph. We traverse the Eulerian
cycle and take shortcuts whenever visiting vertices more than once. The
approximation ratio achieved by this algorithm is $\log_2 n$ for
\minatsp~\cite{FriezeEA:TSP:1982}.

\begin{algorithm}[t]
\begin{algorithmic}[1]
\item[] $\ptsp \leftarrow \minalg(G, w, \eps)$
\Input directed complete graph $G=(V,E)$, $n = |V|$, edge weights
       $w: E \to \ratio_+^k$, $\eps > 0$
\Output $(\log n + \eps)$ approximate Pareto curve for \kcminatsp k
        with probability at least $1/2$
\State $\eps' \leftarrow \eps^2/\log^3 n$
\State $\pp_0 \leftarrow \{\emptyset\}$
\For{$j \leftarrow 1 \text{ to } \lfloor \log_2 n \rfloor$}
       \label{line:while}
  \State $\pp_{j} \leftarrow \emptyset$
         \label{line:jcomp}
  \ForAll{$C \in \pp_{j-1}$}
    \If{$(V, C)$ is connected}
      \State add $C$ to $\pp_{j}$
             \label{line:pffill}
    \Else
      \State select one vertex of every component of $(V, C)$ to obtain
             $V'$
             \label{line:pareto}
      \State $\pc \leftarrow \mincc(V', w, \eps', \frac 1{2Q \log n})$
             \label{line:cccomp}
\Comment $Q$ is defined in Lemma~\ref{lem:running}
      \State $\pp_{j} \leftarrow \pp_{j} \cup \{C \cup C' \mid C' \in \pc\}$
             \label{line:paretoend}
    \EndIf
  \EndFor
  \While{there are $C, C' \in \pp_{j}$ with the same $\eps'$-signature}
         \label{line:sparse}
    \State remove one of them arbitrarily
  \EndWhile\label{line:sparseend}
  \State $j \leftarrow j+1$
\EndFor \label{line:jcompend}
\State $\ptsp \leftarrow \emptyset$
       \label{line:hamconstruct}
\ForAll{$C \in \pp_{\lfloor \log_2 n \rfloor}$}
  \State walk along the Eulerian cycle of $C$, take shortcuts
  to obtain a Hamiltonian cycle $H$
  \State add $H$ to $\ptsp$
\EndFor \label{line:hamend}
\end{algorithmic}
\caption{Approximation algorithm for \kcminatsp k.}
\label{alg:minatsp}
\end{algorithm}

To approximate \kcminatsp k, we use \mia\ (Algorithm~\ref{alg:minatsp}), which
proceeds as follows: We compute an approximate Pareto curve of cycle covers of minimum weight.
This is done by \mca, where $\mincc(G, w, \eps, p)$ computes a $(1+\eps)$
approximate Pareto curve of cycle covers of $G$ with weights $w$ with a success
probability of at least $1-p$ in time polynomial in the input size, $1/\eps$,
and $\log(1/p)$.
Then we iterate by computing approximate Pareto curves of cycle covers
on vertex sets $V'$ for every cycle cover $C$ in the previous set. The set $V'$
contains exactly one vertex of every cycle of $C$. Unfortunately, it can happen
that we construct a super-polynomial number of solutions in this way. To cope
with this, we remove some intermediate solutions if there are other intermediate
solutions whose weight is close by. We call this process \emph{sparsification}.
It is performed in lines~\ref{line:sparse} and~\ref{line:sparseend} of
Algorithm~\ref{alg:minatsp} and based on the following observation: Let
$\eps > 0$, and consider a set $H$ of edges of weight $w(H) \in \ratio_+^k$. For
every $i \in \{1,\ldots, k\}$ with $w_i(H) \neq 0$, there is a unique
$\ell_i \in \nat$ such that
$w_i(H) \in \bigl[(1+\eps)^{\ell_i}, (1+\eps)^{\ell_i+1}\bigr)$. If
$w_i(H) = 0$, then we let $\ell_i = -\infty$. We call the
vector $\ell = (\ell_1, \ldots, \ell_k)$ the \emph{$\eps$-signature} of $H$ and
of $w(H)$. Since $w(H) \in \bigl([2^{-p(N)}, 2^{p(N)}] \cup \{0\}\bigr)^k$,
where $N$ is the size of the instance,
the number of possible values of $\ell_i$ is bounded by a polynomial
$q(N, 1/\eps)$. There are at most $q^k$
different $\eps$-signatures, which is polynomial for fixed $k$. To get an
approximate Pareto curve, we can restrict ourselves to at most one solution
for any $\eps$-signature.

The set $\pp_0$ is initialized with the empty set of edges.
In the loop in lines~\ref{line:while}
to~\ref{line:jcompend}, the algorithm computes iteratively Pareto curves of
cycle covers. The set $\pp_j$ contains sets $C$ of edges consisting of cycle covers:
Given a $C \in \pp_{j-1}$, $\pp_j$ contains cycle covers
on the graph consisting of one node for every connected component of $C$.
If $(V,C)$ is already connected, then $C$ is simply put into $\pp_j$ without modification.
In lines~\ref{line:sparse} and~\ref{line:sparseend}, the sparsification takes
place. Finally, in lines~\ref{line:hamconstruct} to~\ref{line:hamend},
Hamiltonian cycles are constructed from the Eulerian graphs.

Let us now come to the analysis of the algorithm. Our goal is to prove the
following result, which follows from Lemmas~\ref{lem:running}, \ref{lem:ratio},
and~\ref{lem:prob12} below. \mia\ is the first approximation algorithm for
\kcminatsp k.

\begin{theorem}
\label{thm:minatsp}
  For every $\eps > 0$ and $k \geq 2$, Algorithm~\ref{alg:minatsp} is a randomized
  $(\log n + \eps)$ approximation for \kcminatsp k\
  with a success probability of at least $1/2$. Its
  running-time is polynomial in the input size and~$1/\eps$.
\end{theorem}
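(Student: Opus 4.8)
The plan is to establish the three claimed properties---polynomial running time, the stated approximation ratio, and success probability at least $1/2$---separately, matching the three lemmas (\ref{lem:running}, \ref{lem:ratio}, \ref{lem:prob12}) announced in the theorem, and then to combine them. The running time I would control by first bounding the number of iterations of the outer loop (lines~\ref{line:while}--\ref{line:jcompend}): since every cycle of a cycle cover has length at least two, choosing one vertex per component in line~\ref{line:pareto} at least halves the vertex set, so after at most $\lceil \log_2 n\rceil$ rounds every surviving configuration is connected and is moved to $\pf$. Within a round, the sparsification in lines~\ref{line:sparse}--\ref{line:sparseend} keeps at most one configuration per $\eps'$-signature, and the number of distinct signatures is $q(N,1/\eps')^k$, which is polynomial for fixed $k$; call this bound $Q$. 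Each surviving configuration triggers a single call to \mca, each returning a polynomially sized curve in polynomial time, and since $\eps' = \eps^2/\log^3 n$ is still inverse-polynomial the FPTAS stays polynomial. Multiplying these polynomial factors gives the bound of Lemma~\ref{lem:running}.

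The heart of the argument is the approximation ratio (Lemma~\ref{lem:ratio}). Fix an arbitrary Hamiltonian cycle $\tilde H$; I would show by induction on the level $j$ that $\pp_j$ contains a configuration tracking $\tilde H$. The key geometric fact is that restricting $\tilde H$ to any vertex subset $V'$ by shortcutting yields a Hamiltonian cycle, hence a cycle cover, on $V'$, whose weight is at most $w(\tilde H)$ in the metric case (and controllably larger under the $\gamma$-triangle inequality). Thus at each level the minimum cycle cover on the current vertex set is bounded by this shortcut, and the $(1+\eps')$ approximate curve returned by \mca\ contains a cycle cover with weight at most $(1+\eps')\,w(\tilde H)$. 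The weighting $\tilde w = w' + \gamma^j\,w(\tilde C)$ in line~\ref{line:paretoend} is chosen precisely so that $w'$ records the weight the level-$j$ cycle cover actually contributes to the final tour after it has been shortcut $j$ times: each round of shortcutting under the $\gamma$-triangle inequality scales a contribution by at most $\gamma$, exactly as in Frieze et al.~\cite{FriezeEA:TSP:1982} and Bl\"aser et al.~\cite{BlaeserEA:ATSP:2006}. Summing the accumulated weights gives $\sum_{j=0}^{\lceil\log_2 n\rceil}\gamma^j\,w(C_j) \le (1+\eps')\,w(\tilde H)\sum_{j\ge 0}\gamma^j$, which is at most $(1+\eps')\,\frac{1}{1-\gamma}\,w(\tilde H)$ for $\gamma\in[\tfrac12,1)$ and at most $(1+\eps')(\lceil\log_2 n\rceil+1)\,w(\tilde H)$ when $\gamma=1$; with $\eps'=\eps^2/\log^3 n$ both bounds collapse to $(\tfrac1{1-\gamma}+\eps)\,w(\tilde H)$ and $(\log n+\eps)\,w(\tilde H)$, respectively.

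For the success probability (Lemma~\ref{lem:prob12}) I would simply bound the number of invocations of the randomized subroutine \mca. Each of the at most $\lceil\log_2 n\rceil$ rounds processes at most $Q$ configurations and hence issues at most $Q$ calls, plus the single initial call in line~\ref{line:initialcc}; each call is made with error parameter $\frac{1}{2Q\log n}$. A union bound over these at most $Q\log n + 1$ calls then yields a total failure probability of at most $1/2$, giving success probability at least $1/2$ as required; amplification to $1-2^{-m}$ follows by repetition as in Section~\ref{ssec:mcopt}.

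The main obstacle is the inductive tracking in Lemma~\ref{lem:ratio}, because sparsification may delete the very configuration that follows $\tilde H$. The fix is to carry the induction through the surviving representative of that configuration's $\eps'$-signature: its accumulated weight $w'$ differs from the ideal by at most a factor $(1+\eps')$ per coordinate, so the target is replaced by a nearby one and the errors compound to only $(1+\eps')^{O(\log n)}$ over all levels. Verifying that this compounding factor, together with the per-level FPTAS error and the geometric sum, stays within the additive $\eps$ slack is exactly where the seemingly odd choice $\eps'=\eps^2/\log^3 n$ is needed, and checking this bookkeeping is the one genuinely delicate part of the proof.
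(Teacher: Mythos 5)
Your proposal is correct and follows essentially the same route as the paper: the same three-way split into running time, ratio, and success probability, the same inductive tracking of a configuration following $\tilde H$ level by level with the $\gamma^j$-weighted accounting for shortcut losses, and the same handling of sparsification by passing to the surviving representative of the same $\eps'$-signature so that errors compound to $(1+\eps')^{O(\log n)}$, absorbed by the choice $\eps'=\eps^2/\log^3 n$. The paper packages the two halves of your ratio argument as two explicit sub-lemmas (that a final configuration's recorded weight $w'$ upper-bounds the weight of the tour built from it, and the inductive bound $w'\leq(1+\eps')^{\ell+1}\sum_{j=0}^{\ell}\gamma^j\cdot w(\tilde H)$), but the content is the same.
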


We observe that for every $j \in \{0,1,\ldots, \lfloor \log_2 n \rfloor\}$ and $C \in \pp_j$, the graph
$(V, C)$ consists of at most $n/2^j$ connected components. For $j = 0$, this holds
since $(V, \emptyset)$ consists of $n$ connected components.
For $j > 0$ and
$C \in \pp_{j-1}$, $(V,C)$ consists of at most $n/2^{j-1}$ connected components
by the induction hypothesis. If $(V,C)$ is connected, then
$C \in \pp_{j}$, and the claim holds since $n/2^j \geq n/2^{\lfloor \log_2 n \rfloor} \geq 1$.
Otherwise, since every cycle involves at least two vertices,
the claim holds also for $\pp_j$.
This yields that $(V,C)$ is connected for all $C \in \pp_{\lfloor \log_2 n\rfloor}$:
Such a $(V,C)$ consists of at most $n/2^j \leq n/2^{\lfloor \log_2 n\rfloor} < 2$
connected components.

Let us now analyze the running-time. After that, we examine the approximation
performance and finally the success probability.

\begin{lemma}
\label{lem:running}
  The running-time of Algorithm~\ref{alg:minatsp} is polynomial in the input
  size and $1/\eps$.
\end{lemma}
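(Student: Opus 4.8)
The plan is to bound the running time as the product of three quantities: the number of iterations of the outer loop in lines~\ref{line:while}--\ref{line:jcompend}, the maximum size that any set $\pp_j$ attains, and the work done per configuration (dominated by a single call to \mincc). I would define $Q$ to be an upper bound on the total number of invocations of \mincc over the whole execution and show that $Q$ is polynomial; this is precisely the quantity that feeds the failure parameter $\frac{1}{2Q\log n}$ passed to \mincc in lines~\ref{line:initialcc} and~\ref{line:cccomp}, so fixing its definition here is also what the success-probability analysis will rely on.

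For the number of iterations I would invoke the observation preceding the lemma: every $\pi = (C',w',V',\pi') \in \pp_j$ satisfies $|V'| \le n/2^j$, whence $\pp_j = \emptyset$ as soon as $j \ge \lfloor\log n\rfloor$, so the outer loop runs at most $\lfloor\log n\rfloor$ times. For the size of $\pp_j$ the decisive point is the sparsification in lines~\ref{line:sparse}--\ref{line:sparseend}: afterwards $\pp_j$ keeps at most one configuration per $\eps'$-signature. Each accumulated weight $w'$ is a sum of at most $\lfloor\log n\rfloor$ cycle-cover weights, each bounded by $2^{p(N)}$ and scaled by a power of $\gamma \le 1$, so $w' \le 2^{p(N)}\log n \le 2^{p'(N)}$ componentwise for a polynomial $p'$; consequently each signature component $\ell_i$ is bounded by some $q = q(N,1/\eps')$ that is polynomial, and there are at most $q^k$ distinct signatures. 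Hence $|\pp_j| \le q^k$ after sparsification, which is polynomial for fixed $k$.

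With these bounds I would set $Q = 1 + \lfloor\log n\rfloor\cdot q^k$, since iteration $j$ issues at most $|\pp_{j-1}| \le q^k$ calls to \mincc, plus the single initial call; this $Q$ is polynomial. Each such call runs in time polynomial in the input size, $1/\eps'$, and $\log(1/p)$: the choice $\eps' = \eps^2/\log^3 n$ from line~\ref{line:eps} gives $1/\eps' = \log^3 n/\eps^2$, polynomial in $n$ and $1/\eps$, while $p = \frac{1}{2Q\log n}$ gives $\log(1/p) = \log(2Q\log n)$, polynomial since $Q$ is. Each call returns a Pareto curve of polynomial cardinality, so the pre-sparsification size of $\pp_j$ is polynomial and one sparsification pass costs time polynomial in $|\pp_j|$ and $k$. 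Finally, the reconstruction in lines~\ref{line:hamconstruct}--\ref{line:hamend} processes at most $q^k$ elements of $\pf$, each tracing a predecessor chain of length at most $\lfloor\log n\rfloor$ with polynomial-time shortcutting per step. Multiplying the $O(\log n)$ iterations by the $q^k$ configurations by the polynomial per-call cost yields a polynomial total, as claimed.

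The main obstacle is the polynomial bound on $|\pp_j|$, i.e.\ arguing that sparsification keeps the sets small: without it the count could compound by a polynomial factor at each of the $\Theta(\log n)$ levels and reach super-polynomial size. The argument rests on two facts that must be checked with care — that the accumulated weights $w'$ stay within $2^{p'(N)}$ (so that the $\eps'$-signatures inhabit a polynomially bounded range) and that $1/\eps'$ remains polynomial even though $\eps'$ shrinks with $n$. Both are secured by the specific choice $\eps' = \eps^2/\log^3 n$, and tracking how this choice propagates into the signature count is the delicate part, whereas the per-call running-time estimate is then routine.
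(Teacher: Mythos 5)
Your proof is correct and follows essentially the same route as the paper's: both arguments hinge on the observation that sparsification keeps each $\pp_j$ down to at most one configuration per $\eps'$-signature, that the number of signatures is polynomial (for fixed $k$), and that the loop depth is at most $\lfloor \log n\rfloor$, so only polynomially many polynomial-time calls to the cycle-cover FPTAS are made. Your version is merely more explicit than the paper's on two points it leaves implicit — that the accumulated weights $w'$ remain bounded by $2^{p'(N)}$ so the signature range stays polynomial, and that $1/\eps' = \log^3 n/\eps^2$ keeps each FPTAS call polynomial — and your slightly different bookkeeping for $Q$ (total invocations rather than signature count) is harmless.
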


\begin{proof}
  Let $N$ be the size of the instance at hand, and let $Q=Q(N, 1/\eps')$ be a
  two-variable polynomial that bounds the number of different $\eps'$-signatures
  of solutions for instances of size at most $N$. We abbreviate ``polynomial in
  the input size and $1/\eps$'' simply by ``polynomial.'' This is equivalent to
  ``polynomial in the input size and $1/\eps'$'' by the choice of $\eps'$.

  The approximate Pareto curves can be computed in polynomial time with a
  success probability of at least $1-(2Q \log n)^{-1}$ by executing the
  randomized FPTAS $\lceil \log(2Q \log n) \rceil$ times. Thus, all operations
  can be implemented to run in polynomial time provided that the cardinalities
  of all sets $\pp_j$ are bounded from above by a polynomial $Q$ for all $j$.
  Then, for each $j$, at most $Q$ approximate Pareto curves of cycle covers are
  constructed in line~\ref{line:cccomp}.

  For every $\eps'$-signature and every $j$, the set $\pp_j$ contains at most
  one set of edges for any $\eps'$-signature. The lemma follows since
  the number of different $\eps'$-signatures is bounded by~$Q$.
\end{proof}

Let us now analyze the approximation ratio. To do so, we will assume that all
randomized computations of $(1+\eps')$ approximate cycle covers are successful.

\begin{lemma}
\label{lem:ratio}
  Assume that in all executions of line~\ref{line:cccomp} of
  Algorithm~\ref{alg:minatsp} an $(1+\eps')$ approximate Pareto curve of
  cycle covers is successfully computed. Then Algorithm~\ref{alg:minatsp}
  achieves an approximation ratio of
  $\log_2 n + \eps$ for \kcminatsp k.
\end{lemma}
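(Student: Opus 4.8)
The plan is to prove the approximation ratio $\sum_{j=0}^{\lfloor \log n\rfloor - 1}\gamma^j + \eps$ by a charging argument that tracks, for a fixed but arbitrary Hamiltonian cycle $\tilde H$, how the weight accumulated across the $\lfloor \log n\rfloor$ recursion levels relates to $w(\tilde H)$. Since $\pp_j$ is empty for $j \geq \lfloor \log n\rfloor$ (established in the excerpt), the recursion depth is exactly $\lfloor \log n\rfloor$, which is where the geometric sum $\sum_{j=0}^{\lfloor\log n\rfloor-1}\gamma^j$ comes from. For \minatsp\ with $\gamma = 1$ this is just $\log n$, and for \gatsp\ it is $\tfrac{1}{1-\gamma}$ in the limit, so proving this one geometric bound handles both stated ratios simultaneously.

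\textbf{The inductive core.} First I would fix the target $\tilde H$ and argue that at each level $j$ there is a configuration in $\pp_j$ whose weight is comparable to the restriction of $\tilde H$ to the surviving vertex set $V'$. The key facts to combine are: (i) a minimum-weight cycle cover on any $V' \subseteq V$ is a lower bound for the optimal Hamiltonian cycle on $V'$, hence for the shortcut of $\tilde H$ onto $V'$; (ii) \mca\ returns a $(1+\eps')$ approximate Pareto curve, so some $\tilde C$ in the computed set satisfies $w(\tilde C) \leq (1+\eps')\cdot w(\tilde H|_{V'})$ componentwise; and (iii) under the $\gamma$-triangle inequality, taking shortcuts on the Eulerian graph (cycle cover plus recursively built tour) inflates the weight by a factor that, accumulated over the levels, is controlled by the powers $\gamma^j$ attached to $w(\tilde C)$ in line~\ref{line:paretoend}. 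The weighting by $\gamma^j$ is precisely engineered so that the contribution of level $j$ to the final tour weight is at most $\gamma^j \cdot w(\tilde H)$ up to the $(1+\eps')$ slack; I would make this the inductive invariant and push it from level $j-1$ to level $j$.

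\textbf{Handling sparsification.} The main obstacle is that sparsification (lines~\ref{line:sparse}--\ref{line:sparseend}) may delete the very configuration that witnesses the inductive invariant. To deal with this, I would show that whenever a configuration is removed, the one retained has the same $\eps'$-signature, so its weight agrees with the removed one up to a factor of $(1+\eps')$ in each coordinate. Thus the retained configuration still witnesses the invariant, merely with one extra factor of $(1+\eps')$ per level. Over all $\lfloor\log n\rfloor$ levels the compounded slack is $(1+\eps')^{2\lfloor\log n\rfloor}$ or so, and here the choice $\eps' = \eps^2/\log^3 n$ from line~\ref{line:eps} is what makes this slack collapse into the additive $+\eps$: one checks that $(1+\eps')^{O(\log n)} = 1 + O(\eps'\log n) = 1 + o(\eps)$, so the multiplicative error over the whole recursion is absorbed into the additive $\eps$ term.

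\textbf{Assembling the bound.} Finally I would sum the per-level contributions. The final Hamiltonian cycle emerging from $\pf$ has weight at most $\sum_{j=0}^{\lfloor\log n\rfloor-1}\gamma^j \cdot w(\tilde C_j)$ where each $w(\tilde C_j) \leq (1+\eps')\,w(\tilde H|_{V_j})\leq (1+\eps')\,w(\tilde H)$ by monotonicity of the shortcut operation under the triangle inequality. Pulling out the common $(1+\eps')^{O(\log n)}$ factor and bounding it by $1 + \eps/\sum_j\gamma^j$ gives
\[
  w(H) \leq \left(\sum_{j=0}^{\lfloor\log n\rfloor-1}\gamma^j\right)\cdot w(\tilde H) + \eps\cdot w(\tilde H),
\]
which is the claimed ratio. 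Because $\tilde H$ was arbitrary, the set $\ptsp$ meets the approximate-Pareto-curve requirement, and since the analysis assumed all cycle-cover computations succeeded, exactly as the lemma hypothesizes, we are done.
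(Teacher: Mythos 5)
Your proposal follows essentially the same route as the paper's proof: it splits the argument into (a) bounding the final tour weight by the $\gamma^j$-weighted sum of cycle-cover weights via the $\gamma$-triangle inequality (the paper's Lemma~\ref{cla:wbound}), and (b) an induction over recursion levels showing some surviving configuration has weight at most $(1+\eps')^{O(\log n)}\cdot\sum_j\gamma^j\cdot w(\tilde H)$, using that the shortcut of $\tilde H$ onto $V'$ is itself a cycle cover and paying one extra $(1+\eps')$ factor per level for sparsification (the paper's Lemma~\ref{cla:induction}), before absorbing the compounded slack into the additive $\eps$ via the choice $\eps'=\eps^2/\log^3 n$. The only difference is a slightly looser count of the $(1+\eps')$ factors ($2\lfloor\log n\rfloor$ versus the paper's $\lfloor\log n\rfloor$), which is immaterial given the choice of $\eps'$.
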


\begin{proof}
  Let $\tilde H$
  be any Hamiltonian cycle on $V$. We have to show that the set $\ptsp$ of
  solutions computed by Algorithm~\ref{alg:minatsp} contains a Hamiltonian cycle
  $H$ with $w(H) \leq (\log_2 n + \eps) \cdot w(\tilde H)$.
  
  Given any $C \in \pc_{\lfloor \log_2 n \rfloor}$, due to the triangle inequality,
  we construct a Hamiltonian cycle $H$ in lines~\ref{line:hamconstruct}
  to~\ref{line:hamend} such that $w(H) \leq w(C)$.
  What remains to be proved is that, for every Hamiltonian cycle $\tilde H$,
  there exists a $C \in \pc_{\lfloor \log_2 n \rfloor}$ such that
  $w(C) \leq (\log n+ \eps) \cdot w(\tilde H)$.

\begin{lemma}
\label{cla:induction}
  For every $j$, there exists a $C \in \pp_{j}$
  with $w(C) \leq (1+\eps')^{j} \cdot j \cdot w(\tilde H)$.
\end{lemma}

\begin{proof}
  The proof is by induction on $j$. For $j = 0$, the lemma holds since $w(\emptyset) = 0$.

  Now assume that the lemma holds for $j -1$ for $j > 0$.
  Consider any $C \in \pp_{j-1}$ that satisfies the lemma for $j-1$ and $\tilde H$.
  Such a $C$ exists by the induction hypothesis.
  If $(V,C)$ is connected, then $C \in \pp_j$, and $C$ satisfies the lemma also for $j$.
  Otherwise, let $V'$ be the set of vertices chosen from the connected components of $(V,C)$ in
  line~\ref{line:pareto}.
  Let $\tilde H'$ be $\tilde H$ restricted to $V'$ by taking shortcuts. By
  the triangle inequality, we have $w(\tilde H') \leq w(\tilde H)$. After
  line~\ref{line:cccomp}, $\pc$ contains a cycle cover $C'$ with
  $w(C') \leq (1+\eps') \cdot w(\tilde H')$. Then
  \[
     w(C' \cup C) \leq \bigl((1+\eps')^{j-1} \cdot (j-1)  +
                   (1+\eps')\bigr)
          \cdot w(\tilde H).
  \]
  What remains to be analyzed is the sparsification in lines~\ref{line:sparse}
  to~\ref{line:sparseend}. After that 
  $\pp_j$ contains a $C''$ (with might coincide with $C \cup C'$)
  with with the same
  $\eps'$-signature as $C \cup C'$. Thus,
  \begin{align*}
         w(C'') & \leq (1+\eps') \cdot w(C \cup C'') \leq (1+\eps') \cdot
         \bigl((1+\eps')^{j} \cdot j +
               (1+\eps')\bigr) \cdot w(\tilde H) \\
   &  \leq (1+\eps')^{j+1} \cdot (j+1) \cdot w(\tilde H),
  \end{align*}
  and $C''$ fulfills the requirements of the lemma.
\end{proof}

  Since every $C \in \pc_{\lfloor \log_2 n\rfloor}$ yields a Hamiltonian cycle
  without increasing the weight, we obtain an approximation
  ratio of
  \[
          \log_2 n \cdot (1+\eps')^{\lfloor \log n \rfloor}
      \leq  \log_2 n  \cdot \left(1+\frac{\eps^2}{\log^3 n}\right)^{\log n}
     \leq  \log_2 n  \cdot \exp\left(\frac{\eps^2}{\log^2 n}\right)
     \leq  \log_2 n  + \eps.
  \]
  The first inequality follows from our choice of $\eps'$. The second inequality
  holds since $(1+\frac xy)^y \leq \exp(x)$. The third inequality
  holds because $\exp(x^2) \leq 1+x$ for $x \in [0, 0.7]$ (we assume
  $\eps/\log n < 0.7$ without loss of generality).
\end{proof}

\begin{lemma}
\label{lem:prob12}
  The probability that in a run of Algorithm~\ref{alg:minatsp}
  every execution of
  line~\ref{line:cccomp} is successful is at least $1/2$.
\end{lemma}

\begin{proof}
  Line~\ref{line:cccomp} of
  Algorithm~\ref{alg:minatsp} is executed at most $Q \cdot \log n$ times, where
  $Q$ is an upper bound for the number of different $\eps'$-signatures of
  solutions of instances of size at most $N$.
  Each execution fails with a probability of at most
  $\frac{1}{2Q \log n}$. Thus by a union bound, the probability that one of
  them fails is at most $1/2$.
\end{proof}

Since randomization is only needed for \mca, Lemma~\ref{lem:prob12} completes
the proof of Theorem~\ref{thm:minatsp}.

\begin{remark}
According to Bl\"aser et al.~\cite{BlaeserEA:ATSP:2006}, the algorithm of Frieze et al.~\cite{FriezeEA:TSP:1982}
can be turned into a $\frac 1{1-\gamma}$-approximation for \minatsp\ with $\gamma$-triangle inequality
for $\gamma \in [\frac 12, 1)$. An instance fulfills the $\gamma$-triangle inequality,
if $w(u,v) \leq \gamma \cdot (w(u,x) + w(x,v))$ for all distinct $u,v,x$.
In the same way, Algorithm~\ref{alg:minatsp} can be turned into
a $\frac{1}{1-\gamma} + \eps$ approximation
for this variant of multi-criteria TSP. This improves over existing results~\cite{MantheyRam:MultiCritTSP:2009}
for $\gamma \geq 0.55$.
\end{remark}

\section{Concluding Remarks}
\label{sec:remarks}

We have presented approximation algorithms for almost all variants of
multi-criteria TSP. The approximation ratios of our algorithms are independent
of the number $k$ of criteria and come close to the currently best ratios for
TSP with a single objective. Furthermore, they work for any
number of criteria.

Furthermore, we have devised a
deterministic $7/27$ approximation for \kcmaxstsp 2\
with cubic running-time, and we proved
that for \kcmaxstsp 2, there always exists a $1/3$ approximate Pareto curve that
consists of a single element.

Most approximation algorithms for multi-criteria TSP use randomness
since computing approximate Pareto curves of cycle covers requires randomness.
This raises the question of whether there are algorithms
for multi-criteria TSP that are faster, deterministic, and achieve better
approximation ratios.

\section*{Acknowledgement}

I thank Markus Bl\"aser and Mahmoud Fouz for fruitful discussions as well
as the anonymous referees for their helpful comments.



\begin{thebibliography}{10}

\bibitem{AhujaEA:NetworkFlows:1993}
Ravindra~K. Ahuja, Thomas~L. Magnanti, and James~B. Orlin.
\newblock {\em Network Flows: Theory, Algorithms, and Applications}.
\newblock Prentice-Hall, 1993.

\bibitem{AngelEA:BicritTSP:2004}
Eric Angel, Evripidis Bampis, and Laurent Gourv{\'e}s.
\newblock Approximating the {P}areto curve with local search for the bicriteria
  {TSP(1,2)} problem.
\newblock {\em Theoretical Computer Science}, 310(1--3):135--146, 2004.

\bibitem{AngelEA:MultiTSP:2005}
Eric Angel, Evripidis Bampis, Laurent Gourv{\`e}s, and J{\'e}r{\c{o}}me Monnot.
\newblock ({N}on-)ap\-prox\-i\-mability for the multi-criteria {TSP(1,2)}.
\newblock In Maciej Li{\'s}kiewicz and R{\"u}diger Reischuk, editors, {\em
  Proc.\ of the 15th Int. Symp. on Fundamentals of Computation Theory (FCT)},
  volume 3623 of {\em Lecture Notes in Computer Science}, pages 329--340.
  Springer, 2005.

\bibitem{AsadpourEA:ATSPolog:2010}
Arash Asadpour, Michel~X. Goemans, Aleksander Madry, Shayan~Oveis Gharan, and
  Amin Saberi.
\newblock An ${O}(\log n/\log\log n)$-approximation algorithm for the
  asymmetric traveling salesman problem.
\newblock In {\em Proc.\ of the 21st Ann. ACM-SIAM Symp. on Discrete Algorithms
  (SODA)}, pages 379--389. SIAM, 2010.

\bibitem{BlaeserManthey:MWCC:2005}
Markus Bl{\"a}ser and Bodo Manthey.
\newblock Approximating maximum weight cycle covers in directed graphs with
  weights zero and one.
\newblock {\em Algorithmica}, 42(2):121--139, 2005.

\bibitem{BlaeserEA:MCMaxTSP:2008}
Markus Bl{\"a}ser, Bodo Manthey, and Oliver Putz.
\newblock Approximating multi-criteria {Max-TSP}.
\newblock In {\em Proc.\ of the 16th Ann. European Symp. on Algorithms (ESA)},
  volume 5193 of {\em Lecture Notes in Computer Science}, pages 185--197.
  Springer, 2008.

\bibitem{BlaeserEA:ATSP:2006}
Markus Bl{\"a}ser, Bodo Manthey, and Ji{\v{r}}{\'i} Sgall.
\newblock An improved approximation algorithm for the asymmetric {TSP} with
  strengthened triangle inequality.
\newblock {\em Journal of Discrete Algorithms}, 4(4):623--632, 2006.

\bibitem{Ehrgott:MultiTSP:2000}
Matthias Ehrgott.
\newblock Approximation algorithms for combinatorial multicriteria optimization
  problems.
\newblock {\em International Transactions in Operational Research}, 7(1):5--31,
  2000.

\bibitem{Ehrgott:MulticriteriaOpt:2005}
Matthias Ehrgott.
\newblock {\em Multicriteria Optimization}.
\newblock Springer, 2005.

\bibitem{EhrgottGandibleux:Multiobjective:2000}
Matthias Ehrgott and Xavier Gandibleux.
\newblock A survey and annotated bibliography of multiobjective combinatorial
  optimization.
\newblock {\em OR Spectrum}, 22(4):425--460, 2000.

\bibitem{FeigeSingh:ATSP:2007}
Uriel Feige and Mohit Singh.
\newblock Improved approximation ratios for traveling salesperson tours and
  paths in directed graphs.
\newblock In {\em Proc.\ of the 10th Int. Workshop on Approximation Algorithms
  for Combinatorial Optimization Problems (APPROX)}, volume 4627 of {\em
  Lecture Notes in Computer Science}, pages 104--118. Springer, 2007.

\bibitem{FriezeEA:TSP:1982}
Alan~M. Frieze, Giulia Galbiati, and Francesco Maffioli.
\newblock On the worst-case performance of some algorithms for the traveling
  salesman problem.
\newblock {\em Networks}, 12(1):23--39, 1982.

\bibitem{GilmoreEA:WellSolved:1985}
Paul~C. Gilmore, Eugene~L. Lawler, and David~B. Shmoys.
\newblock Well-solved special cases.
\newblock In Eugene~L. Lawler, Jan~Karel Lenstra, Alexander H. G.~Rinnooy Kan,
  and David~B. Shmoys, editors, {\em The Traveling Salesman Problem: A Guided
  Tour of Combinatorial Optimization}, pages 87--143. John Wiley \& Sons, 1985.

\bibitem{Hoeffding:SumsBounded:1963}
Wassily Hoeffding.
\newblock Probability inequalities for sums of bounded random variables.
\newblock {\em Journal of the American Statistical Association},
  58(301):13--30, 1963.

\bibitem{KaplanEA:TSP:2005}
Haim Kaplan, Moshe Lewenstein, Nira Shafrir, and Maxim~I. Sviridenko.
\newblock Approximation algorithms for asymmetric {TSP} by decomposing directed
  regular multigraphs.
\newblock {\em Journal of the ACM}, 52(4):602--626, 2005.

\bibitem{Manthey:MCMinMax:2009}
Bodo Manthey.
\newblock On approximating multi-criteria {TSP}.
\newblock In {\em Proc.\ of the 26th Int. Symp. on Theoretical Aspects of
  Computer Science (STACS)}, pages 637--648, 2009.

\bibitem{MantheyRam:MultiCritTSP:2009}
Bodo Manthey and L.~Shankar Ram.
\newblock Approximation algorithms for multi-criteria traveling salesman
  problems.
\newblock {\em Algorithmica}, 53(1):69--88, 2009.

\bibitem{PaluchEA:MaxTSP79:2009}
Katarzyna Paluch, Marcin Mucha, and Aleksander Madry.
\newblock A 7/9 approximation algorithm for the maximum traveling salesman
  problem.
\newblock In {\em Proc.\ of the 12th Int. Workshop on Approximation Algorithms
  for Combinatorial Optimization Problems (APPROX)}, volume 5687 of {\em
  Lecture Notes in Computer Science}, pages 298--311. Springer, 2009.

\bibitem{PapadimitriouYannakakis:TradeOffs:2000}
Christos~H. Papadimitriou and Mihalis Yannakakis.
\newblock On the approximability of trade-offs and optimal access of web
  sources.
\newblock In {\em Proc.\ of the 41st Ann. IEEE Symp. on Foundations of Computer
  Science (FOCS)}, pages 86--92. IEEE Computer Society, 2000.

\bibitem{RaviGoemans:ConstrainedSpanningTree:1996}
R.~Ravi and Michel~X. Goemans.
\newblock The constrained minimum spanning tree problem.
\newblock In Rolf~G. Karlsson and Andrzej Lingas, editors, {\em Proc.\ of the
  5th Scandinavian Workshop on Algorithm Theory (SWAT)}, volume 1097 of {\em
  Lecture Notes in Computer Science}, pages 66--75. Springer, 1996.

\end{thebibliography}

\end{document}